\DeclareMathOperator{\Ha}{\mathcal{H}}
\DeclareMathOperator{\Id}{id}
\DeclareMathOperator{\diag}{diag}
\newcommand{\R}{\mathbb{R}}
\newcommand{\p}{\partial}
\newcommand{\phii}{(u ^i\vert u^*_{-i})}
\newtheorem{theorem}{Theorem}
\newtheorem{proposition}{Proposition}
\newtheorem{remark}{Remark}
\newtheorem{definition}{Definition}
\newtheorem{lemma}{Lemma}
\begin{document}

\title[What does a dynamic oligopoly maximize? The Markov case]{What does a dynamic oligopoly maximize? \\
The continuous time Markov case
}
\author{
Juan Pablo Rinc\'on--Zapatero}
\address{Departamento de Econom\'\i a, Universidad Carlos III de Madrid}
\email{jrincon@eco.uc3m.es}
\urladdr{https://www.eco.uc3m.es/$\sim$jrincon/index.html}
\thanks{Support from the Spanish Ministerio de Econom\'{\i}a y Competitividad grant PID2020-117354GB-I00 is gratefully acknowledged. The author thanks to On\'esimo Hern\'andez--Lerma and Diego Moreno
for helpful comments. The usual caveat applies.}
\date	{July 30, 2024}
\keywords{Oligopoly, monopoly, observational equivalence, differential game, Markov Perfect Nash Equilibrium}					
\begin{abstract} We analyze the question of whether the outcome of an oligopoly exploiting a nonrenewable resource can be replicated by a related monopoly, within the framework of continuous time and Markov Perfect Nash Equilibrium. We establish necessary and sufficient conditions and find explicit solutions in some cases. Also, very simple models with externalities are shown which Nash equilibrium cannot be replicated in a monopoly.
\end{abstract}
\maketitle

\section{Introduction}
Margaret Slade established in \cite{Slade} necessary and sufficient conditions for a Nash equilibrium to be the outcome of a single optimization problem, constructed on the basis of a ``fictitious" objective function matching the first order necessary conditions of the original game. She focused on oligopolistic games, motivated by the question about whether individual firms with selfish objetives may behave as a single agent optimizing a single objective function.  In \cite{Slade}, static as well as dynamic games in discrete time were considered. See \cite{GSHL2013-1}, \cite{GSHL2014} for an analysis of the question within the framework of discrete time stochastic dynamic games, and \cite{GSHL2016} for a survey of results.

Recently, the continuous time case has been incorporated to the literature, as in \cite{Dragoneetal}, \cite{FonsecaHernandez18} or \cite{FonsecaHernandez20}. The authors in \cite{Dragoneetal} propose to construct a ``fictitious" Hamiltonian function fulfilling two premises: the first order partial derivatives of the new Hamiltonian and of the original Hamiltonians of the players must coincide, and the new Hamiltonian can be written as the sum of a function (it may depend on the costate variables of the players) and the product of the dynamics of the original game with the corresponding costate variables. The authors show that, under some circumstances, the Hamiltonian potential
can be the representation of the original differential game. The method in \cite{Dragoneetal} can be applied to open--loop or to feedback strategies. The need for working with the costate variables of the original differential games, as well as maintaining the original dynamics, make difficult to find the expression of the fictitious payoff function of the control problem equivalent to the game. 
 \cite{FonsecaHernandez18} and \cite{FonsecaHernandez20}
 set the question within the framework of open loop Nash equilibrium. A problem with open loop strategies is that Nash equilibrium is not credible in general, that is, it is not subgame perfect.\footnote{See \cite{Basar} or \cite{Fudenberg} for excellent accounts about the impact of the information available to the players on the properties of Nash equilibrium. Here, we focus on the class of Markov strategies, where the players have access to the current value of time and state variable prior to taking decisions.} The oligopolist may have incentives to deviate from the equilibrium at intermediate stages of the game. It is thus desirable to tackle the question addressed  in \cite{Slade} allowing the players to use Markov strategies.  

It is not unreasonable to think that, as the players are more strategically sophisticated, it becomes harder to summarize their strategic behaviour into a single agent optimization problem. In fact, this is the case, as we will show along the paper. Uncorrelated payoffs can also make difficult to answer the question. In general, the more uncorrelated the players' payoffs are, the more difficult is for a single--agent problem to be observationally equivalent to the oligopoly. For instance, we will show that in a duopoly with asymmetric externalities, there are preferences of the players that cannot be subsumed into the preferences of a unique agent\footnote{Note that team problems have players perfectly positively correlated and zero sum games are perfectly negatively correlated. As it was shown on Slade's paper, these two extreme cases can be represented by preferences of a unique player.}

It is worth noting that, although the fictitious monopoly version of the oligopoly game maintains the same structure, in general it has different payoff and different dynamics. The ``fictitious" payoff was already in the foundational Slade's paper\footnote{The construction of the fictitious payoff gave rise to the fundamental concept of potential  game, see \cite{MondererShapley}.}
, but realizing the need of a ``fictitious" dynamics, is new, to our knowledge. This is in contrast with the open loop Nash equilibrium, where the dynamics may be maintained, see \cite{Dragoneetal}, \cite{FonsecaHernandez18} and \cite{FonsecaHernandez20}.

An interesting byproduct of constructing a monopoly being observationally equivalent to a given oligopoly with Markov players is that it induces a way to measure the effect of oligopoly competition on resource preservation. This is somewhat related with the tragedy of the commons phenomenon, see for instance \cite{Clemhout86} and \cite{Clemhouthandbook}. The tragedy of the commons appears when Nash equilibrium implies higher depletion of the resource than under cooperative exploitation. There is a problem with measuring the tragedy of the commons impact: it needs comparison of Nash equilibrium with one of the Pareto optimal solutions of the game. It arises the question of which of the multiple Pareto solutions to choose for comparison. In the symmetric case, the equal--weight Pareto solution is a quite natural choice, but in the asymmetric case it seems to be not so straightforward. In the particular games that we study in this paper---oligopolies of resource extraction--- we define a competition index of the oligopoly as the difference on the total extraction rate under oligopoly and the fictitious monopoly, divided by the total extraction rate under oligopoly. The higher the value of the index, the higher the level of competition for the resource among oligopolist due to uncorrelated objectives. For instance, in a symmetric oligopoly of resource extraction of $N$ players and without externalities, the index is
$$
\text{CI} = \frac{N-1}N \frac1{\varepsilon(u)},
$$
where $u$ is the individual extraction rate, $N$ is the number of players and $\varepsilon(u)$ is the elasticity of the marginal utility. See the definition of the ``Competition Index" below in \eqref{CI}. Since $\varepsilon >0$ when the utility function is increasing and concave and there is no externalities, we see that CI is positive and hence increasing with $N$. If the utility is HARA, then $1/\varepsilon(u)=1/\alpha>0$ and then CI is constant. Since $1/\alpha$ is also the relative index of risk aversion, it happens that CI is greater in symmetric oligopolies where the player's risk aversion is greater, meaning that oligopolists compite more, the more risk averse they are. Externalities put another dimension on the index. In the case of symmetric Cobb--Douglas preferences with externalities and where the individual utility function is given by $L^i(u^1,\ldots,u^N)=(1-\alpha)^{-1}(u^i)^{1-\alpha}\prod_{j\neq i} (u^j)^{1-\beta}$, with $\alpha, \beta >0$, the index becomes---see \eqref{CI}---
$$
\text{CI} = \frac{N-1}N \left(\frac{\frac{\alpha - \beta}{1-\alpha}}{-\alpha + (N-1)(1-\beta)}\right).
$$
With $\beta = 1$ (no externality), the expression above for CI is recovered, whilst $\beta >1$ ($0<\beta<1$) means that the externality is negative (positive). Note that when $\alpha = \beta$, the index equals 0, meaning that the extraction effort is the same in the oligopoly and in the equivalent monopoly. For other values of $\alpha $ and $ \beta$, the competition strength in the oligopoly depends on the number of players $N$ and on $\alpha$ and $\beta$.

The organization of the paper is as follows. In Section 2 we establish the framework where we develop our results, that consists of a differential game with $N$ players, smooth data and no constraints---or with constraints, but with interior Nash equilibrium---.  In Section 3 we set a control problem of the same structure that may be capable of rationalizing the differential game. In Section 4, we focus on oligopolies of extraction of a nonrenewable resource
with externalities. Theorem 1 in Section 5 shows that the symmetric oligopoly is always reproducible as a monopoly, with suitable fictitious payoff, discount factor, dynamics and bequest function. Section 6 applies the results above to find explicit expressions in two different oligopoly modes, the difference being that in one of the models the market price is made explicit, whereas in the other model it is hidden into the preferences of the agents, which may show actitudes towards consumptions of the competitors, interpreted as externalities.
Section \ref{Conclusions} establishes the conclusions. The paper has three appendixes. Appendix \ref{AppA} explains the methodology that we use in our investigation. It consists in using a system of partial differential equations obtained from the Hamilton--Jacobi--Bellman equations, characterizing directly the feedback Nash equilibrium. Appendixes \ref{AppB} and \ref{AppC} study asymmetric oligopolies, based on
Lemma \ref{lemma:Theta} in Appendix \ref{AppA}, which establishes a necessary condition for observational equivalence in asymmetric games. This is used in Appendix \ref{AppB} to solve the problem with a duopoly with multiplicative preferences with externalities, in an infinite horizon. Appendix \ref{AppC}
focusses on additive externalities. It is shown in Theorem \ref{th:NO} of this appendix that there are preferences such that the outcome cannot be replicated by a single agent problem. On the positive side, sufficient conditions are given to make possible the representation in a particular class of duopolies of finite horizon and null discount factor.

It is important to say that along the paper it is taken for granted that in all the games studied a MPNE exists, at least, for some values of the parameters.

\section{Description of the differential game, Nash Equilibrium and Hamilton--Jacobi--Bellman equations}
The paper makes extensive use of notation and results exposed in Appendix \ref{AppA}\footnote{Notation: A subscript indicates partial
differentiation; derivatives of real functions are indistinctly notated with subscripts or with primes; the partial derivative of a scalar function with
respect to a vector and the partial derivative of a vector
function with respect to a scalar are defined as column vectors.
Also, the partial derivative of a vector function with respect to
another vector is defined as a matrix, e.g. $\,h_z=\partial
h/\partial z = (\partial h^i/\p z^j)_{n\times m}$, where $\,h\,$
and $\,z\,$ are $\,n\times 1\,$ and $\,m\times 1\,$ vectors,
respectively. The symbols $\partial_t$ and $\partial_x$ denote total derivatives (or derivatives of a compound function). The superscript $\,^{\top}\,$ will denote the transposition
}
. 

We consider an $\,N$--person differential game on a time interval $[0,T]$ (finite horizon) or $[0,\infty)$ (infinite horizon). 
For $i=1,\ldots, N $, player $i$ chooses a vector of strategies
$u^ i = (u^ i_1,\ldots,u^i_n)^{\top}\in U^i\,$ which affect the
vector of state variables, $y=(y^1,\ldots,y^n)^{\top} \in  \R^n$,
which evolution is given by the system of ordinary differential
equations
\begin{equation}\label{Sisa}
\dot y(s) =F(s, y(s), u^ 1(s),\ldots, u^ N(s)),\quad t\le s\le T,
\end{equation}
with initial condition
\begin{equation}\label{Sisb}
y(t)=x,\quad t\in[0,T],\quad x\in X\subseteq \R^n.
\end{equation}
The control region of $i$th player is denoted $U^i\subseteq \R^n$.
The data $(t,x)$ is the initial node or root of a subgame, that we will identify with the subgame itself. The bequest function is the final payoff of the player when the game ends. When the time horizon is infinite, there is no bequest function in this case.

\begin{definition}\label{Def}{\rm (Admissible strategies)}.
A strategic profile $\,u=(u^1,\ldots, u^N)\,$ is called admissible
if $u(s)\in U^1\times\cdots \times U^N$ for every $s\in[0,T]$ and
\begin{enumerate}
\item[{\rm (i)}] for each $i=1,\ldots, N$, there exists a function
$\,\phi^i : [0,T]\times X\subseteq\mathbb{R}^n\longrightarrow U^i\,$ of
class $\mathcal{C}^{1}$ such that $\,u^i(s) = \phi^i(s,y (s) )$
for every $s\in [0,T]$;
 \item[{\rm (ii)}] for every $(t,x)$ the system \eqref{Sisa} with
initial condition $y(t)=x$ admits a unique solution.
\end{enumerate}
\end{definition}
Thus, we consider feedback strategies. Let $\,{\mathcal U}^i$ be the set of this kind of admissible strategies of
player $i$ and let $\, {\mathcal U}={\mathcal U}^1\times \cdots \times {\mathcal
U}^N$ be the set of admissible strategy profiles. If $\,\phi^i\,$ is
time independent, the corresponding control is called a stationary
Markov control.
To simplify the notation, we will identify $u^i$ with the feedback rule $\phi^i$.

The instantaneous utility function of player $i$ is denoted by
$L^i$ and his or her bequest function by $B^i$. Given $\,(t,x)\in
[0,T]\times X\,$ and an admissible strategic profile $\,u$, the
payoff functional of each player is given by
\begin{equation}\label{Fun}
J^i(t,x;u)=\int_t^T e^{-r_i(s-t)}L^i(s,y(s),u(s))\,ds +
e^{-r_i(T-t)}B^i(T,y(T)),
\end{equation}
with $r_i\ge 0$ the discount rate. $J^i$ is the utility obtained by player $i$ when the games starts
at $(t,x)$ and the profile of strategies is $u$.

As said above, in the infinite horizon case the time interval is $[0,\infty)$ and the
bequest function is null. In this case, if the problem
is autonomous and the strategies are Markov stationary, the payoff
functionals are independent of time, and the initial condition is
simply $x$, with $t=0$.

The functions
\begin{align*}
F&:[0,T]\times X\times U^1\times \cdots \times U^N\longrightarrow X,\\
L^i&:[0,T]\times X\times U^1\times \cdots \times U^N \longrightarrow \R,\\
B^i&:[0,T]\times X\longrightarrow \R,
\end{align*}
are all assumed to be of class $\mathcal{C}^{2}$.

In a non--cooperative setting the aim of the players is to
maximize their individual payoff $J^i$. Since this aspiration
depends also on the strategies selected by the other players, it
is generally impossible to attain. An adequate concept of solution
is Nash equilibrium, which prevents unilateral deviations of the
players from its recommendation of play.
The Markov Perfect Nash Equilibrium (MPNE) considers optimality at every subgame $(t,x)$.

\begin{definition} {\rm(MPNE)}.
An $\,N$-tuple of strategies $\,u^{*}\in {\mathcal U}\,$ is
called a Markov perfect Nash equilibrium if for every $\,(t,x)\in
[0,T]\times X$, for every $u ^ i\in {\mathcal U}^i$
\[
J^ i\big(t,x;\phii \big)) \le J^ i(t,x;u^*),
\]
for all $\,i=1,\ldots,N$.
\end{definition}
In the definition above, $\phii$ denotes $
(u^{*1},\ldots,u^{*i-1},u^i,u^{*i+1},\ldots,u^{*N})$.
Note that at an MPNE, no player has incentives to deviate
unilaterally from $u^*$, whatever the initial condition
$\,(t,x)\,$ is.
Let $\,V^i\,$ be the value function of the $\,i$th player, that is
\[
V^i(t,x) = \max_{u ^i\in {\mathcal U}^i} \left\{J^i(t,x;\phii)\;:\;
\dot y = F(s,y,\phii ),\ y(t) = x, \forall s\in (t,T)\right\}.
\]
Under our smoothness conditions, the value functions satisfy the HJB system of PDEs
\begin{equation}\label{HJBgame}
-r_iV^i + V^i_t + \max_{u ^i\in {\mathcal U}^i} \left\{e^{-r_it}L^i(t,x,\phii) + F^{\top}(t,x,\phii)V_x^i \right\}= 0,\quad i=1,\ldots,N.
\end{equation}
From this famous PDEs and upon taking derivatives, we will show in Appendix \ref{AppA} another system of PDEs characterizing Nash equilibrium. Although this has been obtained previously in other papers by the author and collaborators, we will repeat the exercise in the appendix for convenience of the reader.

\section{Question addressed in the paper and equivalent optimal control problem}

The problem addressed in this paper is the following:\it

Given the $N$--person noncooperative game described in the previous section
\begin{equation}\label{DG}
\mathrm{DG} = ((L^i)_{i=1}^{N}, (r_i)_{i=1}^{N}, F, (B^i)_{i=1}^{N},(\mathcal U^i)_{i=1}^{N})
\end{equation}
with MPNE $u^*\in \mathcal{U}$, determine an optimal control problem
\begin{equation}\label{OC}
\mathrm{OC} = (\ell, \rho, f, b),
\end{equation}
which admits $u^*$ as optimal solution.

\rm

In OC, $\ell$ is the payoff integrand, $\rho\ge 0$ is the discount factor, $f$ is the dynamics, and $b$ is the bequest function.
We exclude from the description of OC the feasible set $\mathcal{U}=\mathcal{U}^1\times \cdots\times\mathcal{U}^N$, as it is given in the description of the original DG, and it does not change.

\begin{definition}\label{equi}
We will say that the single agent decision problem OC is equivalent to the differential game DG, or that OC rationalizes the MPNE $u^*$, if $u^*$ is solution of both DG and OC.
\end{definition}

The (equivalent) optimal control problem's full description is
\[
\max_{u\in \mathcal{U}}  J(t,x;u)=\int_t^T e^{-{\rho}(s-t)}\ell(s,z(s),u(s))\,ds +
e^{-\rho(T-t)}b(T,z(T)),
\]
subject to
\[
\dot z(s) =f(s, z(s), u^ 1(s),\ldots, u^ N(s)),\quad t\le s\le T,
\]
with initial condition
$$
z(t)=x,\quad t\in[0,T],\quad x\in X\subseteq \R^n.
$$
Of course, the maximization is performed at once with respect to $u^1,\ldots,u^N$, and $u^*$ is an optimal feedback control iff $J(t,x;u^*)\ge J(t,x;u)$, for all admissible $u$ and all admissible initial condition $(t,x)$.

The functions
\begin{align*}
f&:[0,T]\times X\times U^1\times \cdots \times U^N\longrightarrow X,\\
\ell&:[0,T]\times X\times U^1\times \cdots \times U^N \longrightarrow \R,\\
b&:[0,T]\times X\longrightarrow \R,
\end{align*}
are all assumed to be of class $\mathcal{C}^{2}$.

\section{The oligopoly game}

We analyze the equivalence question which motivates this paper, which has been established in Definition \ref{equi}, in the framework of a dynamic oligopoly game of resource extraction in continuous time, where the competition for a resource may be affected by consumption externalities. This means that the oligopolists have preferences defined not only on their own consumption, but also on the consumption of the other players. More specifically, 
oligopolist $i\in\{1,\ldots,N\}$ has preferences given by a utility function $L^i(u^i,u_{-i})$, where $L^i$ is symmetric with respect to the consumption of the rest of the players, $u_{-i}=(u^1,\ldots,u^{i-1},u^{i+1},\ldots,,u^N)$. Symmetry means that the consumption of the other players deserves the same preference consideration to the agent\footnote{The case where it matters who of the the other players is consuming the resource could be analized analogously; this requires of course $N> 2$.}: if $\pi u_{-i}$ is a permutation of the profile $u_{-i}$ (that is, a one-by-one exchange of the indexes of the rest of players), then $L^i(u^i,u_{-i})=L^i(u^i,\pi u_{-i})$. Here, $u^i$ is the extraction or consumption rate of a non renewable resource which stock at time $t$ is $y(t)$. Thus,  $\dot y = -\sum_{j=1}^N u^j$ is the evolution of the resource. We wish to study whether it is possible that the MPNE of this game with possible externalities (assuming existence) is identical to a monopolistic problem with suitable preferences, discount factor,  dynamics and bequest function. We assume that the utility of player $i$, $L^i$, is increasing and strictly concave in own consumption, that is
$L^i_{u^i} >0$ and $L^i_{u^iu^i} <0$.
The preferences are given by the functional
\begin{equation}\label{preferences}
J^i(t,x,u^i,u_{-i}) = \int_{t}^T e^{-r_i(s-t)}\,L^i(u^i,u_{-i})\, ds + e^{-r_i(T-t)}B^i(y(T))
\end{equation}
where $r_i\ge 0$ is the discount factor and, as said above
\begin{equation}\label{dynamics}
\dot y = - u^i-\sum_{j\neq i} u^j,\quad y(t)=x\ge 0.
\end{equation}
The case $T=\infty$ is also allowed, which presupposes that there is not bequest payoff at $\infty$, $B^i\equiv 0$.

Let
\begin{equation}\label{Ei}
E_i(u^1,\ldots,u^N) = -\frac{L^i_{u^i}}{\sum_{j=1}^N L^i_{u^iu^j}}(u^1,\ldots, u^N) > 0;
\end{equation}
\begin{equation}\label{E-i}
E_{-i}(u^1,\ldots,u^N) = -\frac{\sum_{j\neq i} L^i_{u^j}}{(N-1)\sum_{j=1}^N L^i_{u^iu^j}}(u^1,\ldots, u^N).
\end{equation}
We define $E_{-i}=0$ if $N=1$. We assume that $E_i>0$, but the sign of $E_{-i}$ is not fixed in advance.

In the case in which $L^i_{u^iu^j}=0$ for $j\neq i$, $E_i$ is the inverse of the absolute risk aversion index of Arrow-Pratt with respect to player's own consumption. In the general case, it can be considered as an average index risk that takes into account consumption of the other players. 
A similar interpretation can be given to $E_{-i}$, as a measure of risk about the consumption of the other players. When $E_{-i}>0$, an increase in consumption of player $j\neq i$ (and thus, of any other player, given the assumption about symmetry made on $L^i$ about $u_{-i}$) raises utility of player $i$, implying that increasing the consumption level of the other players is a positive externality. This feature of the preferences is known in the literature as ``keep up with the Joneses", see \cite{Abel} and \cite{Gali}.
If $E_{-i}<0$, then consumption of the rest of agents diminishes own utility, since the marginal utility is negative, so that $u^j$ can be considered as a substitute good to $u^i$, for $j\neq i$. In this case, consumption of other agents is a negative externality.

\section{Construction of the equivalent control problem to the oligopoly game}

Our approach to the problem rests on a system of PDEs arising as necessary conditions for solutions both of the original game and of the equivalent optimal control problem. This approach was developed in \cite{GJP05}, \cite{JPJG98}, \cite{JP04} for the deterministic case and \cite{RJP07}, \cite{RJP15} for stochastic problems. In the problem we study here, the only element that remains invariant both for the game and for the equivalent control problem is the solution, $u^*$; the rest of elements, like the payoffs, the discount factor, the dynamics and the bequest functions may change, as well as the value functions. Thus, working only with the HJB equations could be not sufficient to solve the problem, being more useful to work directly with the PDEs characterizing the MPNE, $u^*$, which is the invariant object. The fact that two different systems of PDEs (that for the differential games and that for the control problem) admit the same solution is a strong condition that can be applied to get insights into the problem. The results that follow make extensive use of Appendix \ref{AppA} and hence we will refer to that section of the paper most often.

Our starting point is obtain the PDE system \eqref{PDE_MPNE} in Appendix \ref{AppA} for the oligopoly game, by taking total derivatives in \eqref{PDE_conservative}, noting that
$\Gamma^i=L^i_{u^i}$ and $\Ha^i=L^i-(\sum_{j=1}^N u^j)L^i_{u^i}$ for this game, see Remark \ref{PDEconservative}.
The system becomes
\begin{align}\label{PDE}
-r_iL^i_{u^i} + \sum_{j=1}^N L^i_{u^iu^j}u^j_t -\left(\sum_{j=1}^N u^j\right) L^i_{u^iu^i} u^i_x + \sum_{j= 1, j\neq i}^N \left(
L^i_{u^j}-L^i_{u^i} -\left(\sum_{j=1}^N u^j\right) L^i_{u^iu^j} \right)u^j_ x = 0,
\end{align}
for $i=1,\ldots, N$. We have abbreviated notation, with $u^i_t$ meaning $u^i_t(t,x)$ and $u^i_x$ meaning $u^i_x(t,x)$. Of course, $u^1(t,x),\ldots,u^N(t,x)$ are the unknowns of the PDE system and the MPNE of the game $({u^*}^1(t,x),\ldots,{u^*}^N(t,x))$ is a solution.

First we study the symmetric game, both in the finite and the infinite horizon case. We will work out a duopoly version of the asymmetric game in the appendixes \ref{AppB} and \ref{AppC}.

\subsection{Symmetric oligopoly game}\label{Symmetric}
Assume that the game is symmetric, that is
$$
L^i(u^i,u_{-i}) = L^j(u^j,u_{-j}),\quad r_i=r,\quad B^i\equiv B\ \mbox{(when $T<\infty$)},
$$
for all $i, j=1,\ldots,N$ and that we look for symmetric Nash equilibrium $(u^1,\ldots, u^N)=(u,\ldots,u)$. Then \eqref{PDE}  
reduces to a single equation. It suffices to focus on the first player, say. In what follows, let
$$
n_j(u) = L^1_{u^j}(u,\ldots,u),\quad n_{1j}(u) = L^1_{u^1u^j}(u,\ldots,u).
$$
Due to the postulated symmetry, the first PDE in \eqref{PDE} becomes
$$
\left(\sum_{j=1}^Nn_{1j}(u)\right)  u_t + \left(-Nu \left(\sum_{j=1}^Nn_{1j}(u)\right) + \left(\sum_{j\neq i}^Nn_{j}(u)\right) - (N-1) n_1(u)\right)u_x = r n_1(u).
$$
or
\begin{equation}\label{game}
u_t +(-Nu + (N-1) e_1(u) -  (N-1)e_{-1}(u) ) \, u_x = -r e_1(u),
\end{equation}
where $e_1(u) = E_1(u,\ldots,u)$ and  $e_{-1}(u)=E_{-1}(u,\ldots,u)$. See equations \eqref{Ei} and \eqref{E-i}. Remember that $e_{-1}=0$ if $N=1$.

If $u$ is also the optimal solution of a control problem with payoff $\ell(u)$, dynamics $f(u)$, discount $\rho$ and bequest function $b(x)$ ($b$ is needed only when the game horizon is finite), then $u$ must satisfy the FOCs \eqref{OChamiltonian},  \eqref{OCcoestate} and \eqref{PDE_OC} in Appendix \ref{AppA}, which become in this context
\begin{align}\label{control}
u_t + f(u) u_x &= \rho \frac{\gamma(u)}{\gamma'(u)}\\
u(T,x)&=\varphi(x)\quad  \mbox{(if $T<\infty$)},
\end{align}
where $\gamma(u) =-\frac{\ell'(u)}{f'(u)}$.
So, comparing \eqref{game} and \eqref{control}, we can identify\footnote{Of course, two PDEs may share the same set of solutions but still not be proportional with a non null factor. Frobenius Theorem of integrability of PDEs plays a role here; it was used in \cite{JPGJ00} and in \cite{GJP05} for the determination of MPNE which are at the same time Pareto optimal in deterministic games. We implement the simplest case that both equations are the same. It is worth noting that the dynamics of the equivalent monopoly is not $-Nu$, which is the dynamics of the symmetric oligopoly game.}
\begin{equation*}\label{dynamicscontrol}
f(u) = -Nu +  (N-1) (e_1(u)  -  e_{-1}(u));
\end{equation*}
and
\begin{equation}\label{gamma0}
\rho \frac{\gamma(u)}{\gamma'(u)} = -r e_1(u).
\end{equation}
Integrating this differential equation for $\gamma$, we obtain
\begin{equation}\label{gamma}
\gamma(u)=Ce^{-\frac{\rho}r\int \frac {du}{e_1(u)}},
\end{equation}
where $C$ is a constant. Plugging this into the identity $\ell'(u)=-f'(u)\gamma(u)$ and integrating by parts
\begin{equation*}\label{preferencescontrol}
\ell(u) = Cf(u)e^{-\frac{\rho}{r}\int \frac{du}{e_1(u)}} + C \frac{\rho}{r}\int \frac{ f(u)}{e_1(u)} e^{-\frac{\rho}{r}\int^u\frac{dz}{e_1(z)}}\, du.
\end{equation*}
Thus, we have found a triplet OC$=(\ell,\rho,f)$ that is a candidate to rationalize the MPNE $u^*$ in the infinite horizon case. Any $\rho>0$ works, including $\rho=0$ in the finite horizon case.\footnote{
A complete specification will require to check monotonicity and concavity of preferences. For this, it is useful to note that 
$\ell'(u) = -\gamma(u) f'(u)$ and hence $\ell''(u)=-\gamma'(u)f'(u) - \gamma(u) f''(u)$ and study the signs.
Since this does not pose a major conceptual problem, we skip the details.}

\subsubsection{
The case with finite horizon, $T<\infty$.} When the oligopoly is of finite duration, to the triplet OC$=(\ell,\rho,f)$ found above it is needed to incorporate a suitable bequest function, $b$. This can be done by observing that there is a final condition for the MPNE at the terminal time $T$, see \eqref{curva} in Appendix \ref{AppA}
\begin{equation}\label{FCgame}
\left.L^i_{u^i}(u^1(t,x),\ldots,u^N(t,x)) - B_x^i(x)\right|_{t=T} = 0,\quad i=1,\ldots, N.
\end{equation}
Assuming that this system defines univocally the value of the MPNE at $(T,x)$, we write
\begin{equation}\label{FCu}
u^i(T,x)=\varphi^i(x),\quad i=1,\ldots, N,
\end{equation}
for a suitably smooth function $\varphi^i$. In the symmetric game we are analyzing in this section, $\varphi^i\equiv \varphi$.

As is \eqref{FCgame}, the final condition for the equivalent control problem becomes
$$
\left.\ell'(u(t,x)) - b_x(x) f'(u(t,x))\right|_{t=T} = 0,
$$
where $b$ is the bequest function for the control problem we are looking for. Since $\ell'(u)=-\gamma(u)f'(u)$, and $f'(u)\neq 0$, we have
$
b_x(x) = -\gamma (u(T,x))
$.
Change variable $u=\varphi(x)$ into the integral $\int \frac {du}{e_1(u)}$ to obtain
$$
\int \frac {du }{e_1(u)}= \int  \frac {\varphi_x(x)}{e_1(\varphi(x) )}\, dx 
$$
and hence from, \eqref{gamma}
$$
\gamma (u(T,x)) = C e^{(-\frac{\rho}r)\int  \frac {\varphi_x(x)}{e_1(\varphi(x) )}\, dx }.
$$
Thus, the bequest function $b$ for the equivalent control problem has derivative
\begin{equation*}\label{bx}
b_x(x) = C e^{(-\frac{\rho}r)\int  \frac {\varphi_x(x)}{e_1(\varphi(x) )}\, dx }.
\end{equation*}
A further integration recovers $b$.

We collect the above considerations in the following theorem.
\begin{theorem}\label{ThMain}
The symmetric oligopolistic game can be rationalized as a fictitious monopolistic problem OC=$(\ell,\rho,f,b)$ where
\begin{equation}\label{preferencescontrol}
\ell(u) = Cf(u)e^{-\frac{\rho}{r}\int \frac{du}{e_1(u)}} + C \frac{\rho}{r}\int \frac{ f(u)}{e_1(u)} e^{-\frac{\rho}{r}\int^u\frac{dz}{e_1(z)}}\, du;
\end{equation}
\begin{equation}\label{dynamicscontrol}
f(u) = -Nu +  (N-1) (e_1(u)  -  e_{-1}(u));
\end{equation}
$$
\rho\ge 0\quad(\rho>0 \mbox{ if $T=\infty$});
$$
\begin{equation}\label{bx}
b(x) = C \int e^{(-\frac{\rho}r)\int  \frac {\varphi_x(z)}{e_1(\varphi(z) )}\, dz}\,dx \mbox{ if $T<\infty$ and $b=0$ if $T=\infty$.}
\end{equation}
$C$ denotes a constant.
\end{theorem}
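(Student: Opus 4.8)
The plan is to carry out, in order, the identification of the two first-order PDEs that characterize respectively the symmetric MPNE of the oligopoly and the optimal feedback of the candidate monopoly, and then to solve the resulting ordinary differential equations for the unknown data $\gamma$, $\ell$ and $b$.

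First I would recall the PDE system \eqref{PDE} characterizing the MPNE of the oligopoly, obtained in Appendix \ref{AppA} by taking total derivatives in the conservative-form identity, with $\Gamma^i=L^i_{u^i}$ and $\Ha^i=L^i-(\sum_j u^j)L^i_{u^i}$. Imposing the symmetry hypotheses $L^i\equiv L^1$, $r_i=r$, $B^i\equiv B$ and looking for a symmetric profile $(u,\dots,u)$, the $N$ equations collapse to the single scalar PDE \eqref{game}, where $e_1(u)=E_1(u,\dots,u)$ and $e_{-1}(u)=E_{-1}(u,\dots,u)$. Next I would write the analogous scalar condition \eqref{control} that the optimal feedback of the control problem OC$=(\ell,\rho,f,b)$ must satisfy, which comes from the Hamiltonian and costate first-order conditions of Appendix \ref{AppA}, in which $\gamma(u)=-\ell'(u)/f'(u)$ and $f'(u)\neq 0$.

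The core step is then to \emph{identify} \eqref{game} with \eqref{control}. Matching the coefficient of $u_x$ forces $f(u)=-Nu+(N-1)(e_1(u)-e_{-1}(u))$, which is \eqref{dynamicscontrol}; matching the right-hand sides gives the separable ODE $\rho\,\gamma(u)/\gamma'(u)=-r\,e_1(u)$, i.e. \eqref{gamma0}, whose integration yields $\gamma(u)=C\exp(-\tfrac{\rho}{r}\int du/e_1(u))$, formula \eqref{gamma}. Substituting into $\ell'(u)=-f'(u)\gamma(u)$ and integrating by parts produces the closed form \eqref{preferencescontrol} for $\ell$. For the finite-horizon case it remains to produce $b$: the terminal condition \eqref{FCgame} for the game defines $u(T,x)=\varphi(x)$ as in \eqref{FCu}, while the corresponding terminal condition for OC reads $\ell'(u(T,x))-b_x(x)f'(u(T,x))=0$; since $\ell'=-\gamma f'$ this gives $b_x(x)=-\gamma(u(T,x))$, and after the change of variable $u=\varphi(x)$ in $\int du/e_1(u)$ one integrates once more to obtain \eqref{bx}. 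Since $\rho$ enters these formulas only through the ratio $\rho/r$ and never had to be pinned down, any $\rho>0$ works in the infinite-horizon case and $\rho=0$ is admissible when $T<\infty$; the constant $C$ is likewise free.

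The main obstacle, and the only genuinely delicate point, is the legitimacy of replacing ``same solution set'' by ``same equation'': two first-order PDEs can admit $u^*$ as a common solution without being proportional, the obstruction being of Frobenius type (as exploited in \cite{JPGJ00}, \cite{GJP05}). Here we deliberately implement the strongest sufficient version, namely that \eqref{game} and \eqref{control} coincide identically, which is precisely what makes the term-by-term matching above valid and is enough to establish the claimed rationalization. A secondary, purely technical point is admissibility of the constructed OC: smoothness of $\ell$, $f$, $b$ is immediate from that of $L^1$ together with $f'(u)\neq 0$ and $e_1(u)>0$, whereas monotonicity and concavity of $\ell$ (via $\ell'=-\gamma f'$ and $\ell''=-\gamma' f'-\gamma f''$) depend on the specific model and are deferred, as announced.
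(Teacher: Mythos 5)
Your proposal is correct and follows essentially the same route as the paper: reduce \eqref{PDE} to the scalar equation \eqref{game} under symmetry, identify it term by term with \eqref{control} to obtain $f$ and the separable ODE \eqref{gamma0} for $\gamma$, integrate to recover $\ell$, and use the terminal condition \eqref{FCgame}--\eqref{FCu} to construct $b$ in the finite-horizon case. You also correctly flag the two points the paper itself only footnotes --- that identifying the two PDEs (rather than merely their solution sets) is the strong sufficient condition being implemented, and that monotonicity/concavity of $\ell$ is deferred --- so nothing further is needed.
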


In the trivial case $N=1$, obviously OC=DG. When $N\ge 2$, the total extraction rate in the symmetric DG is $Nu$, but in the fictitious monopoly, it is, according to \eqref{dynamicscontrol}
$$
U_{\mathrm{mon}}\equiv -f(u)=Nu - \left(N-1) (e_1(u) -  e_{-1}(u)\right).
$$
Whether $U_{\mathrm{mon}}$ is greater or smaller than $Nu$ depends on the sign of $e_{1}(u) - e_{-1}(u)$. If $e_1>0$ and the externality is negative, $e_{-1}(u)<0$, then $e_{1}(u) - e_{-1}(u)>0$. In this case, $U_{\mathrm{mon}}<Nu$, meaning that the total extraction rate in the fictitious monopoly is smaller than the total extraction rate in the oligopoly. Then we can interpret that negative externalities make the competition stronger than with positive externalities. This interpretation is based on the speed rate to which the resource is depleted under the oligopoly or under fictitious monopoly play.
\begin{definition} The Competition Index of the oligopoly is defined by\footnote{Although this definition is placed for symmetric oligopolies, it is pretty clear that it can be translated to asymmetric ones in the obvious way.}
\begin{equation}\label{CI}
\mbox{$\mathrm{CI}$}= \frac{Nu - U_{\mathrm{mon}}}{Nu} = \left(\frac{N-1}N\right) \left(\frac { e_{1}(u) - e_{-1}(u) }{u}\right).
\end{equation}
\end{definition}
The higher the CI is, the more intense the competition in the oligopoly is.
We will illustrate this index with examples in Section \ref{externalities}.

\section{Applications}
In this section we apply Theorem \ref{ThMain} above to two different formulations of the oligopoly. In the first one, the market price of the resource is given by an inverse demand function, whereas in the second formulation, the oligopolist gets utility from the use of the resource, without being apparent a price. This utility may be affected by the other oligopolists' behavior, which we interpret as an externality. The preferences are of multiplicative type in this case.

\subsection{Oligopoly pricing}

Suppose that the market where the resource is traded is defined by an inverse demand function $p(Q)$, where $Q=\sum_{j=1}^Nu^j$ is the total extraction effort in the industry. Suppose that the cost of extraction of the resource is $c(u^i)$. Both $p$ and $c$ are smooth functions. We only consider inthe finite horizon case to keep the paper within reasonable bounds. The profit of oligopolist $i$ is
$$
L^i(u^i,u_{-i})=u^ip(Q)-c(u^i),
$$
which is the integrand payoff function of the game of player $i$. The following computations are straightforward
\begin{align*}
L^i_{u^i} &= p(Q)+u^ip'(Q)-c'(u^i),\\
L^i_{u^j} &=u^i p'(Q),\quad j\neq i\\
L^i_{u^iu^i} &= 2p'(Q) + u^i p''(Q)-c''(u^i),\\
L^i_{u^iu^j} &= p'(Q)+u^ip''(Q),\quad j\neq i.
\end{align*}
Then
\begin{align*}
E_i(u^1,\ldots,u^N)&=-\frac{p(Q)+u^ip'(Q)-c'(u^i)}{2p'(Q) + u^i p''(Q)-c''(u^i) + (N-1)(p'(Q) + u^ip''(Q))},\\
E_{-i}(u^1,\ldots,u^N)&=-\frac{u^i p'(Q)}{2p'(Q) + u^i p''(Q)-c''(u^i) + (N-1)(p'(Q) + u^ip''(Q))}.
\end{align*}
The oligopoly is symmetric, and thus we consider symmetric MPNE $u^1=\ldots = u^N= u$. Under symmetry, we have $Q=Nu$ and corresponding expressions $e_i(u)=E_i(u,\ldots,u)$, $e_{-1}(u)=E_{-1}(u,\ldots,u)$.
Theorem 1 above gives a positive answer to the observational equivalence question. To find closed form solutions suppose that $p(Q)=AQ^{-q}$, where $A>0$ and $q>0$, and eliminate costs. Then $p(Q)=AN^{-q}u^{-q}$ and
\begin{align*}
e_1(u)&=\frac1q\, u,\\
e_{-1}(u)&= k_1 \,u,
\end{align*}
where $k_1=-\frac{N}{-q-1+(N-1)(N-q-1)}$.

Applying Theorem \ref{ThMain} we find
$$
f(u) = -Nu + (N-1)(q^{-1}- k_1)\, u
$$
Let us denote $k_2=-N+(N-1)(q^{-1} - k_1)$. Noting that
$$
\int \frac{du}{e_1(u)} = q \int \frac{du}u = q\ln{u},
$$
after simplifications and collection of common terms, we arrive to the expression
$$
\ell (u) = Ck_2\left(1+\frac{q\frac{\rho}r}{1-q\frac{\rho}r}\right)u^{1-q\frac{\rho}r}.
$$
Thus, the fictitious monopoly consists of a market given by the inverse demand function
$$
\mbox{Constant }\times \ u^{-q\frac{\rho}r}.
$$


\subsection{Multiplicative preferences with externalities}\label{externalities} In this section we consider oligopolies where a price system is not explicitly given in the specification of the preferences of the players. Instead, we interpret rate of extraction as consumption, and preferences are given by own consumption and consumption of the rest of the players. Hence we let
$$
L^i(u^i,u_{-i}) = m_i(u^i) \prod_{l\neq i} k(u^l).
$$
Thus, the externality affecting an individual player preferences' is multiplicative, $\prod_{l\neq i} k(u^l)$.
We obtain the derivatives
$$L^i_{u^i} = m'_i(u^i) \prod_{l\neq i} k(u^l),\qquad L^i_{u^iu^i} = m''_i(u^i) \prod_{l\neq i} k(u^l)$$
and
$$
L^i_{u^iu^j} = m'_i(u^i) \left(\prod_{l\neq i,j} k(u^l)\right) k'(u^j),
$$
for $j\neq i$.
Thus
$$
E_i=-\frac{L^i_{u^i}}{\sum_{j=1}^N L^i_{u^iu^j}} = \frac{-m'_i(u^i)\prod_{l\neq i} k(u^l)}{ m''_i(u^i) \prod_{l\neq i}  k(u^l)+ m_i'(u^i)\sum_{j=1,j\neq i}^n\left(\prod_{l\neq i,j} k(u^l)\right) k'(u^j)},
$$
$$
E_{-i} = -\frac{\sum_{j\neq i} L^i_{u^j}}{(N-1)\sum_{j=1}^N L^i_{u^iu^j}} = \frac{-m_i(u^i)\sum_{j\neq i} \left(\prod_{l\neq i,j} k(u^l)\right) k'(u^j)}{(N-1)\left( m''_i(u^i) \prod_{l\neq i} k(u^l) + m_i'(u^i)\sum_{j= 1, j\neq i}^n\left(\prod_{l\neq i,j} k(u^l)\right) k'(u^j)\right)}.
$$
In the symmetric case that we are studying in this section, $m_1=\cdots = m_N= m$. In a symmetric equilibrium $u^1=\cdots=u^N=u$, hence we have
$$
e_1(u) = \frac{-m'(u)k(u)^{N-1}}{m''(u)k(u)^{N-1}+(N-1)m'(u)k(u)^{N-2}k'(u)} =  \frac{-m'(u)k(u)}{m''(u)k(u)+(N-1)m'(u)k'(u)}
$$
and
$$
e_{-1}(u) = \frac{-(N-1)m(u)k(u)^{N-2}k'(u)}{(N-1)\left(m''(u)k(u)^{N-1}+(N-1)m'(u)k(u)^{N-2}k'(u)\right)}=\frac{-m(u)k'(u)}{m''(u)k(u)+(N-1)m'(u)k'(u)},
$$
where $e_1$ and $e_{-1}$ where defined above.

%

\subsubsection{Cobb--Douglas preferences} To get form closed solutions, let
\begin{equation}\label{CDpref}
m(u)=\frac 1{1-\alpha} (u)^{1-\alpha},\quad k(u)=u^{1-\beta},\quad \mbox{with } \alpha>0,\beta >0,-\alpha + (N-1)(1-\beta)<0.
\end{equation}
Following the computations above, we obtain
$$
e_1(u) = \frac{-1}{-\alpha + (N-1)(1-\beta)}u\equiv \eta_ 1\, u.
$$
$$
e_{-1} (u)=\frac{-\frac{1-\beta}{1-\alpha}  }{-\alpha +(N-1)(1-\beta)}u\equiv \eta_ 2\, u.
$$
Then from \eqref{dynamicscontrol} 
\begin{align*}
f(u) &= -N u + (N-1)( \eta_1 u  - \eta_2 u) 
\end{align*}
and $\ell(u)$ can be recovered easily from \eqref{preferencescontrol},
$$
\ell(u) = Cu^{-m}(-Nu+(N-1)(\eta_1 u - \eta_2u))+C\frac{\rho}r\int \frac{-N+(N-1)(\eta_1-\eta_2)}{\eta_1}u^{-m}\, du,
$$
where $m=\rho/(\eta_1r)$. Note that $\ell(u)$ is a HARA utility with elasticity of the marginal utility $m$.

Regarding the Competition Index \eqref{CI}, it is
$$
\text{CI} = \frac{N-1}N \left(\frac{-1+\frac{1-\beta}{1-\alpha}}{-\alpha + (N-1)(1-\beta)}\right).
$$
In the duopoly case, $N=2$ (following our assumptions, this requires $\alpha + \beta >1$),  CI is $$\frac 12 \frac{\beta-\alpha}{\alpha + \beta -1}.$$
It is positive when $\beta>\alpha$ and negative otherwise. Moreover, if $\alpha >\frac 12$, CI increases with $\beta$.

\section{Conclusion}\label{Conclusions}
Nash equilibrium is not in general Pareto optimal, and thus it is not the solution of an optimal control problem which objective function is a convex combination of the player's payoffs. It is natural then to ask whether Nash equilibrium could be the solution of a different, but related control problem, that is, whether uncoordinated play could be the outcome of cooperative play of a suitable single agent decision problem. From a philosophical point of view, if this is possible, we can interpret that the noncooperative behaviour still can be understood as cooperative in a suitable ``parallel" economic world. It is intriguing that there are very simplistic oligopoly interactions for which there is no such an ideal cooperative parallel word (or fictitious monopoly). This happens when the player's preferences are highly asymmetric and uncorrelated. This work provides a framework to analyze these kind of questions, which were originated in \cite{Slade}, but within continuous time dynamic games when the players use Markov strategies.

\clearpage

\appendix

\section{Technical Results}\label{AppA}

\subsection{A system of partial differential equations for Nash equilibrium}

\strut

To study the question addressed in the paper, that is, whether a MPNE can be attained under the the rule of a unique agent for a suitable specification of the control problem, we 
will resort to the characterization of the MPNE provided in Rinc\'on-Zapatero et al. (1998) (it was afterwards extended  to the non--smooth case in Rinc\'on-Zapatero
(2004) and to the stochastic case in Josa-Fombellida and Rinc\'on-Zapatero (2007, 2015). The characterization consists in a system of PDEs for the MPNE without participation of the value functions. The quasilinear structure of the system will allow us to obtain readily important directions for analyzing the problem. In Rinc\'on-Zapatero et al. (1998) the system for the MPNE was derived from the maximum principle. For convenience of the reader we derive it here too, but from the HJB equations.

Since we are supposing that the Nash equilibrium is interior, we have
$$
H^i_{u^i} (t,x,u^*,V^i_x) = L^i_{u^i}(t,x,u^*)+F^{\top}_u(t,x,u^*)V^i_x = 0.
$$
Let us denote, for any admissible profile $u$
\begin{equation}\label{Gamma}
\Gamma^i(t, x,u)=\left(F^{-\top}_uL^i_{u^i}\right)(t,x,u),
\end{equation}
for $i=1,\ldots, N$. Each vector $\Gamma^i=(\Gamma^i_1,\ldots,\Gamma^i_n)^{\top}$ has $n$ components, where $n$ is the dimension of the state variable $x$. Note that by definition, if the HJB system is fulfilled, then $\Gamma^i_k(t,x,u^*) =V^i_{x_k}(t,x)$, for $k=1,\ldots,n$.
Now, we derive in the HJB equations \eqref{HJBgame} with respect to $x_k$.

Applying the Envelope Theorem, the derivative of 
$
\max_{u ^i\in {\mathcal U}^i} H^i (t,x,(u^i|u^{*}_{-i}),\Gamma^i_k(x,u^*))
$
is
$$
H^i_{x_k} + \sum_{j\neq i} u^{j\top}_{x_k}  H^i_{u^j}  + \sum_{l=1}^n F^l \partial_{x_k} \Gamma^i_l,
$$
where we eliminate the arguments of the several functions involved, and where write $u$ instead of $u^*$ to facilitate the reading. Since $V_{x_kx_l} = V_{x_lx_k} $, we have $\partial_{x_k} \Gamma^i_l = \partial_{x_l} \Gamma^i_k$ and thus the matrix
$$
\left(\sum_{l=1}^n F^l \partial_{x_k} \Gamma^i_l \right)_{1\le l,k\le n}
$$
can be written as the matrix product $\left(\partial_x \Gamma^i \right)F$, for $i=1,\ldots,N$, and where $\partial_x \Gamma^i = \Gamma^i _x +\sum_{j=1}^N \Gamma^i_{u^j}u^j_x$. Again due to the regularity of $V^i$, $V_{tx_k} = V_{x_kt} $, thus $$V_{tx_k} = \partial_ t \Gamma^i_k = \sum_{j=1}^N \Gamma^i_{k, u^j}u^j_{t}$$ or, in vector form, $\partial_ t \Gamma^i = \Gamma_t^i + \sum_{j=1}^N \Gamma^i_{u^j}u^j_{t}$. Hence, the derivative of the HJB equations \eqref{HJBgame} with respect to $x$ is, after rearranging terms
\begin{equation}\label{PDE_MPNE}
-r_i\Gamma^i  + \sum_{j=1}^N \Gamma^i_{u^j}(u^j_{t} + u^j_x F )+ \sum_{j\neq i} u^{j\top}_{x}  H^i_{u^j}  + H^i_x + \Gamma^i _x F =0,
\end{equation}
for $i=1,\ldots,N$, where the Hamiltonian's derivatives are evaluated at
$$
(t, x, u, \Gamma^i(x,u))
$$
and where $u=u^*$ is the MPNE. This is a system of $N\times n$ PDEs for $u^1,\ldots,u^N$, where the value functions do not appear. Notice that each profile $u^i$ has $n$ strategies, and the number of players is $N$.

\begin{remark}\label{PDEconservative}
Since $H^i_{u^i} = 0$, adding $u^{i\top}_{x}  H^i_{u^i} $ to \eqref{PDE_MPNE} does not change the system, but allows us to write it in conservative form as follows
\begin{equation}\label{PDE_conservative}
\partial_t\left(e^{-r_it}\Gamma^i(t,x,u)\right)) + \partial_x \left(e^{-r_it} \Ha^i(t,x,u)\right) = 0,
\end{equation}
for $i=1,\ldots,N$,
where $\Gamma^i$ is given in \eqref{Gamma} and $\Ha^i(t, x,u) = H^i(t, x,u,\Gamma^i(x,u))$.
\end{remark}

There are boundary conditions satisfied by the costate variable established by
the maximum principle, $p^i(T)=B^i_x(y(T))$. This, and the expression obtained from the
maximization of the Hamiltonian function provide a complete set of
final conditions for the MPNE system (\ref{PDE_MPNE}) given by:
\begin{equation}\label{curva}
L^i_{u^i}(T,x,u)+ F^{\top}_{u^i}(T,x,u)B^i_x(x)=0,\quad i=1,\ldots,N.
\end{equation}

In Rinc\'on-Zapatero et al. (1998) we show that under suitable
hypotheses about the Hamiltonians, a ${\mathcal C}^1$ solution
of (\ref{PDE_MPNE}) satisfying \eqref{curva}, becomes a MPNE of the differential
game. That is, (\ref{PDE_MPNE}) gives not only a set of necessary
conditions but also sufficient for optimality. The sufficiency
result can be summarized as follows:
\begin{theorem}
\hspace{.1cm} Let $\,u^*\in {\mathcal U}\,$ be a global
$\,{\mathcal C}^1\,$ solution of \eqref{PDE_MPNE}, \eqref{curva} interior
to the control region and satisfying $\,\det{(F_{u^i}(t,x,u^*))}\neq 0\,$ for all $(t,x)\in [0,T]\times \R^n$, for all
$i=1,\ldots, N$. Suppose further that for every $(t,x)\,$ and for
all $u^i\in {\mathcal U}^i$
\begin{equation}\label{aster}
H^i(t,x,\phii, \Gamma ^i (t,x,u^*))\le H^i(t,x,u^*, \Gamma
^i (t,x,u^*)),
\end{equation}
for all $i=1,\ldots,N$. Then $\, V^i_ x(t,x)=\Gamma ^i(t,x,u^*(t,x))\,$ and $\,u\,$ is a MPNE of the game
\eqref{Sisa}--\eqref{Fun}.
\end{theorem}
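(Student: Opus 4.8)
The plan is to construct, for each player $i$, a value function for that player's best–response problem against $u^*_{-i}$, and then to verify it. I would let $W^i(t,x)$ be the payoff collected by player $i$ when \emph{all} players follow the profile $u^*$. Under the standing $\mathcal{C}^2$ assumptions on $F,L^i,B^i$ this $W^i$ is well defined, is of class $\mathcal{C}^1$ (one actually wants $\mathcal{C}^2$; see below), and satisfies the policy–evaluation equation for the fixed Markov profile $u^*$,
\[
-r_iW^i+W^i_t+L^i(t,x,u^*)+F^{\top}(t,x,u^*)\,W^i_x=0,\qquad W^i(T,x)=B^i(T,x).
\]
A first observation is that the $i$th final condition \eqref{curva}, combined with $W^i(T,\cdot)=B^i(T,\cdot)$ and with the hypothesis $\det F_{u^i}(t,x,u^*)\neq0$ (which is precisely what makes $\Gamma^i$ in \eqref{Gamma} well defined), says exactly that $W^i_x(T,x)=\Gamma^i\big(T,x,u^*(T,x)\big)$.

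The core of the proof is to promote this terminal identity to the global identity
\[
W^i_x(t,x)=\Gamma^i\big(t,x,u^*(t,x)\big)\qquad\text{on }[0,T]\times\R^n .
\]
I would obtain it by differentiating the policy–evaluation equation with respect to $x$ and regrouping the outcome through the Hamiltonian $H^i$, exactly as in the derivation of \eqref{PDE_MPNE} in this appendix (in particular invoking symmetry of the Hessian of $W^i$) — the only difference being that, since there is no maximization to which the envelope theorem applies, the ``$j=i$'' term survives. The upshot is the MPNE system \eqref{PDE_MPNE} with the costate slot $\Gamma^i$ replaced throughout by $W^i_x$, plus one extra summand $u^{i\top}_x H^i_{u^i}(t,x,u^*,W^i_x)$. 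Since $\Gamma^i(\cdot,u^*)$ itself solves \eqref{PDE_MPNE} by hypothesis, subtracting and writing $G^i:=W^i_x-\Gamma^i(\cdot,u^*)$ yields a \emph{homogeneous} linear first–order equation for $G^i$: every coefficient is a second derivative of $H^i$ (which is affine in the costate), and the extra summand collapses to $u^{i\top}_x\big(F^{\top}_{u^i}G^i\big)$ because $\Gamma^i$ is, by \eqref{Gamma} and the first–order condition, the costate value at which $H^i_{u^i}(t,x,u^*,\Gamma^i)=0$. Hence $G^i$ is transported along the equilibrium flow $\dot y=F(s,y,u^*(s,y))$; as those trajectories exist, are unique on $[t,T]$ for every $(t,x)$ by admissibility (Definition \ref{Def}(ii)), and foliate the domain, the terminal condition $G^i(T,\cdot)=0$ forces $G^i\equiv0$, that is, $W^i_x=\Gamma^i(\cdot,u^*)$.

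With this identity the verification is routine. Fix $(t,x)$ and an admissible $u^i\in\mathcal{U}^i$, and let $\bar y$ denote the trajectory generated by $\phii$ from $(t,x)$. Integrating $\frac{d}{ds}\big(e^{-r_i(s-t)}W^i(s,\bar y(s))\big)$ over $[t,T]$ and using $W^i(T,\cdot)=B^i(T,\cdot)$ together with the policy–evaluation equation gives
\[
J^i(t,x;u^*)-J^i\big(t,x;\phii\big)=\int_t^T e^{-r_i(s-t)}\Big(H^i\big(s,\bar y,u^*,\Gamma^i(s,\bar y,u^*)\big)-H^i\big(s,\bar y,\phii,\Gamma^i(s,\bar y,u^*)\big)\Big)\,ds ,
\]
with the Hamiltonians evaluated along $\bar y$ and common costate $\Gamma^i(s,\bar y(s),u^*(s,\bar y(s)))=W^i_x(s,\bar y(s))$. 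By hypothesis \eqref{aster} the integrand is nonnegative, so $J^i(t,x;u^*)\ge J^i(t,x;\phii)$ for every admissible $u^i$ and every initial node $(t,x)$; hence $u^{*i}$ is a best response to $u^*_{-i}$. Since $i$ was arbitrary, $u^*$ is an MPNE, $W^i$ equals the value function $V^i$, and therefore $V^i_x(t,x)=\Gamma^i(t,x,u^*(t,x))$.

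I expect the only real difficulty to be the middle step, on two fronts. First, using symmetry of $W^i_{xx}$ must be legitimate, which is why $\mathcal{C}^2$ data is assumed and why one should first establish that $W^i$ is $\mathcal{C}^2$ — possibly by strengthening the $\mathcal{C}^1$ requirement on $u^*$, or by a short approximation argument. Second, the bookkeeping that makes the differentiated policy–evaluation equation coincide, term by term, with \eqref{PDE_MPNE}, and that absorbs the ``$j=i$'' correction into the homogeneous operator on $G^i$, must be carried out with care; there the non–degeneracy $\det F_{u^i}(t,x,u^*)\neq0$ enters twice, to define $\Gamma^i$ and to turn \eqref{curva} into $G^i(T,\cdot)=0$. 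The verification inequality and the transport argument along the equilibrium flow are then entirely standard.
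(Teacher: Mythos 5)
Your argument is sound, but note that the paper itself gives no proof of this theorem: it is stated as a citation of the sufficiency result in Rinc\'on--Zapatero, Mart\'{\i}nez and Mart\'{\i}n--Herr\'an (1998), so there is no in-text proof to compare against. Your route --- define $W^i$ as the payoff under the fixed profile $u^*$, show $W^i_x=\Gamma^i(\cdot,u^*)$ by differentiating the policy--evaluation equation and transporting the difference $G^i=W^i_x-\Gamma^i$ along the equilibrium flow with zero terminal data, then run the standard verification integral using \eqref{aster} --- is the natural one and is essentially how such sufficiency theorems are established; the key observations (that the surviving $j=i$ term equals $u^{i\top}_x F^{\top}_{u^i}G^i$ because $H^i_{u^i}(t,x,u^*,\Gamma^i)=0$ by construction of $\Gamma^i$, and that affinity of $H^i$ in the costate makes the equation for $G^i$ homogeneous) are exactly right. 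Two small points: first, as written in \eqref{Gamma} the paper has $\Gamma^i=F_u^{-\top}L^i_{u^i}$ while the first--order condition gives $V^i_x=-F_{u^i}^{-\top}L^i_{u^i}$; your proof silently uses the sign convention under which $H^i_{u^i}(t,x,u^*,\Gamma^i)=0$, which is the intended one, but it is worth stating explicitly. Second, the regularity issue you flag is real but slightly misplaced: beyond $W^i\in\mathcal{C}^2$, the term-by-term matching with \eqref{PDE_MPNE} in the vector case $n>1$ also requires the symmetry $\partial_{x_k}\Gamma^i_l(t,x,u^*)=\partial_{x_l}\Gamma^i_k(t,x,u^*)$ for the \emph{given} solution of \eqref{PDE_MPNE} (an integrability condition on $\Gamma^i\circ u^*$, not a free consequence of smoothness), which is part of the ``suitable hypotheses'' the paper alludes to; in the scalar case $n=1$, which is the one used throughout the applications, this is vacuous and your proof is complete as sketched.
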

\begin{remark}
{\rm Theorem 1 states that the costate variables of the players
coincide with the gradient of the value function respect to $x$,
$\, p^i(t)=V^i_ x(t,x)$.}
\end{remark}

As a control problem is a particular case of a game with only one player, we can deduce readily the associated PDE system for an optimal solution $u$ of a control problem with the same characteristics than the game (interior solutions, equal dimension of state and control variables).
Remember that in the control problem OC=$(\ell,\rho,f,b)$, $\ell$ denotes the integrand, $f$ the dynamics, $\rho$ the discount factor and $b$ the bequest function; let
\begin{equation}\label{OChamiltonian}
h(t,x,u,p) = \ell(t,x,u) + f^{\top} (t,x,u)p
\end{equation}
be the Hamiltonian. Since the optimal control is interior, $h_u=0$, and this condition serves to define
\begin{equation}\label{OCcoestate}
\gamma(t,x,u) = -f_u^{\top}\ell_u(t,x,u)
\end{equation}
the vector of coestate variables. Note that $p(t)$ equals $\gamma(t,x(t), u(t,x(t))$, where $x(t)$ denotes the trajectory obtained by using the feedback control $u$. Also, let $\rho\ge 0$ be the discount factor. The PDE system \eqref{PDE_MPNE} becomes
\begin{equation}\label{PDE_OC}
 -\rho\gamma + \gamma_{u}(u_{t} + u_x f ) + h_x + \gamma _x f = 0.
\end{equation}
The extra complexity of system \eqref{PDE_MPNE} with respect to \eqref{PDE_OC} comes from the terms $ \sum_{j\neq i} u^{j\top}_{x}  H^i_{u^j}$ and $\sum_{j\neq i}^N \Gamma^i_{u^j}(u^j_{t} + u^j_x f )$, which are not present in \eqref{PDE_OC}. They are a consequence of the game interaction.

\subsection{A necessary condition for equivalence}\label{Sect_necc}
Recall that in our problem, given the game, we want to determine, if possible, a control problem with a similar structure where the profile of strategies $u=(u^1,\ldots,u^N)$ is the solution. By a ``similar structure", we mean that the dimension of the state variable is the same as the one of the game, that is, $n$, and that the number of controls is $N\times n$, which the total number of strategies of the game. However, we are free to choose both the dynamics and payoff functional, and even the discount factor, but within the class of functions that respect the game model. We will be more explicit about this in what follows. hus, we face an hypothetical control problem with $N\times n$ control variables, but only $n$ state variables. This does not fit into the framework where we have found \eqref{PDE_OC}. However, admitting, as we do, that an optimal solution exists, still we can find a PDE for the optimal control. 
Note that from the first order necessary condition for the Hamiltonian $h$
$$
\ell_{u^i} + f_{u^i}^{\top} p = 0
$$
for $u^i = (u^1_1,\ldots, u^i_ n)$, for $i=1,\ldots,N$, where $p$ is the vector of coestate variables. We have then $N$ linear systems for $p$, being consistent all of them, as we are supposing that optimal controls exist. So
$$
p = - f_{u^1}^{-\top} \ell_{u^1} =\cdots = - f_{u^N}^{-\top} \ell_{u^N},
$$
or, in terms of the vector functions $\gamma^i$ defined above,
\begin{equation}\label{Theta}
\gamma^1 (t,x,u) = \cdots = \gamma^N(t,x,u)
\end{equation}
at the optimal $u=u^*$. The next result shows conditions when \eqref{Theta} defines locally $N-1$ functions which depend smoothly on $x$ and on the controls of on of the players. Without loss of generality we choose $u^1$. In fact, along the text we will suppose that the aforementioned functions are globally defined.
\begin{lemma}\label{lemma:Theta}
Suppose that the $n\times N$ vector $u=u^*$ is a solution of the optimal control problem described above with $n$ state variables. Then, if the $n(N-1)\times n(N-1)$ determinant
\[
\left|
\begin{array}{c}
  (\gamma^1-\gamma^2)_{u^2,\ldots,u^N}   \\[1ex]
  \vdots  \\
   (\gamma^1-\gamma^N)_{u^2,\ldots,u^N}
\end{array}
\right|
\]
is not zero, then \eqref{Theta} defines a smooth solution
$$
u^j = \Theta^j(t,x,u^1),
$$
where $\Theta^j:[0,T]\times X\times U^1 \rightarrow U^j$, for $j=1,\ldots,N$, $\Theta^j=(\theta^1\equiv \Id, \theta^2,\ldots,\theta^n)$.
\end{lemma}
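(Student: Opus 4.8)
The plan is to read Lemma \ref{lemma:Theta} as a textbook application of the Implicit Function Theorem to the chain of equalities \eqref{Theta}, after rewriting that chain as the vanishing of a single map whose Jacobian is exactly the determinant appearing in the statement.

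First I would introduce $m=n(N-1)$ and define, on a neighborhood of the optimal point $(t,x,u^*)$,
\[
G(t,x,u^1,u^2,\ldots,u^N)=\bigl((\gamma^1-\gamma^2)(t,x,u),\ldots,(\gamma^1-\gamma^N)(t,x,u)\bigr)\in\R^m,
\]
obtained by stacking the $N-1$ vector differences, each of dimension $n$; note that the number of scalar equations $m$ equals the number of scalar unknowns in the block $(u^2,\ldots,u^N)$, so the system is square. Each $\gamma^i(t,x,u)=-f_{u^i}^{-\top}\ell_{u^i}(t,x,u)$ is well defined and of class $\mathcal{C}^1$: $\ell$ and $f$ are $\mathcal{C}^2$ by the standing assumptions, and $f_{u^i}$ is invertible near $(t,x,u^*)$ by the same nondegeneracy already used to write \eqref{OCcoestate} and invoked in the sufficiency theorem ($\det f_{u^i}\neq 0$). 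Hence $G$ is $\mathcal{C}^1$, and $G(t,x,u^*)=0$ because the first order conditions $\ell_{u^i}+f_{u^i}^{\top}p=0$ at the optimal control force $\gamma^1(t,x,u^*)=\cdots=\gamma^N(t,x,u^*)=p$, which is precisely \eqref{Theta}.

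Next I would identify the partial Jacobian of $G$ with respect to the variables $(u^2,\ldots,u^N)$: by construction its $(j-1)$-th block row is $(\gamma^1-\gamma^j)_{u^2,\ldots,u^N}$, so this $m\times m$ Jacobian is literally the matrix whose determinant is assumed nonzero in the hypothesis. The Implicit Function Theorem then produces a neighborhood of $(t,x,u^{*1})$ in $[0,T]\times X\times U^1$ and unique $\mathcal{C}^1$ maps $u^j=\Theta^j(t,x,u^1)$, $j=2,\ldots,N$, with $\Theta^j(t,x,u^{*1})=u^{*j}$, solving $G\equiv 0$; taking $\Theta^1\equiv\Id$ on $U^1$ (we chose $u^1$ as the free variable) completes the tuple. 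Since $u^{*j}$ is interior to $U^j$, shrinking the neighborhood and using continuity of $\Theta^j$ guarantees $\Theta^j$ takes values in $U^j$, so $\Theta^j:[0,T]\times X\times U^1\to U^j$ as claimed.

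There is no substantive obstacle here — the content of the lemma is exactly the IFT. The only points needing a line of care are the $\mathcal{C}^1$-regularity and genuine well-definedness of the $\gamma^i$ (i.e. invertibility of $f_{u^i}$, guaranteed by the nondegeneracy hypothesis carried throughout the paper) and the essentially tautological remark that the Jacobian the IFT asks to invert is precisely the displayed determinant. The passage from this local statement to the globally defined $\Theta^j$ used in the rest of the text is not part of the claim and would require extra monotonicity or properness assumptions on $G$.
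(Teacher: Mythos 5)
Your proposal is correct and follows exactly the paper's own argument: rewrite \eqref{Theta} as the square system $\gamma^1-\gamma^j=0$, $j=2,\ldots,N$, observe that the partial Jacobian in $(u^2,\ldots,u^N)$ is the displayed determinant, and apply the Implicit Function Theorem. Your added remarks on the $\mathcal{C}^1$-regularity of the $\gamma^i$ and on the local-versus-global caveat are consistent with the paper, which explicitly assumes global definedness of the $\Theta^j$ in the surrounding text.
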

\begin{proof}
Without loss of generality, write \eqref{Theta} as $\gamma^1-\gamma^j=0$, for $j=2,\ldots,N$. This system of $n(N-1)$ equations admits solution by hypotheses. Let $(\gamma^1-\gamma^j)_{u^2,\ldots,u^N}$ be the $n\times n(N-1)$ matrix of derivatives of $\gamma^1-\gamma^j$ with respect to $u^2,\ldots,u^N$. The condition on the determinant of the lemma assures that the Implicit Function Theorem applies, and hence the system of equations define $u^2,\ldots,u^N$ as smooth functions of $t$, $x$ and $u^1$.
\end{proof}

This lemma implies that the system \eqref{PDE_MPNE} is overdetermined for $u^1$ ($n\times N$ PDEs for only $n$ unknowns $u^1$), which of course, impose certain limitations in the functions and elements that define the game. This is better observed in a class of scalar games with finite horizon and no explicit dependence with respect to the state variable, which is going to be analyzed in the following section.

\subsection{Scalar game with no explicit dependence on time nor on the state variable, with finite horizon and null discount}
Let a game DG as in \eqref{DG} with $n=1$, that is $X\subseteq \R$. Assume that $\Gamma_u=(\Gamma^i_{u^j})$ is invertible. As we are supposing that $n=1$, the system \eqref{PDE_MPNE} can be written
$$
\Gamma_u\left(u_t + Au_x\right) = \diag(r_1,\ldots,r_N) \Gamma - \Ha_x - F\Gamma_x.
$$
where $A = (FI_{N} 
+ \Gamma_u^{-1}H_u)$, $I_N$ the identity matrix of order $N$ and
$H_u=(H^i_{u^j})$. 
Here, $\diag(r_1,\ldots,r_N)$ is a diagonal matrix with the shown elements in the diagonal.
Suppose that $L^i_x=F_x=0$ and $r_i=0$, for all $i=1,\ldots,N$.

We look for a control problem OC as in \eqref{OC}, OC=$(\ell, \rho=0, f,  b)$, with the same characteristics, that is, with $\ell_x=f_x=0$. In this case, the vector $\Theta$ given in Lemma \ref{Theta} is independent from $t$ and $x$.

With null discount, $r_i=0$, and no explicit dependence on the state, $\Ha^i_x=0$ and $\Gamma^i_x=0$, for all $i=1,\ldots,N$, the system becomes homogenous
\begin{equation}\label{PDE_A}
u_t + Au_x = 0.
\end{equation}
The following result is well known in the PDE literature. We include it here, joint with the proof, for convenience of the reader. Translated to our framework, the lemma establishes a useful necessary condition for a MPNE of a game as described above to be capable of rationalisation by a control problem. We show in the main text its implications in an oligopoly game with additive externalities. 
\begin{lemma}\label{lemma:curvesolution}
Suppose that the system \eqref{PDE_A} admits a solution $u=(u^1,\ldots,u^N)$, of the form $u=\Theta(u^1)$ for some smooth function $\Theta=(\theta_1,\theta_2,\ldots,\theta_N)$, with $\theta_1=\Id$ and such that $u^1_x\neq 0$. Then, $$\Theta'=(1,\theta'_2,\ldots,\theta'_N)^{\top}$$ is a right eigenvector of $A$ and $u^1$ satisfies the quasilinear scalar PDE $$u^1_t+ \lambda(\Theta(u^1)) u^1_x = 0,$$ where $\lambda $ is an eigenvalue of matrix $A$.
\end{lemma}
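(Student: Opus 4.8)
The plan is to exploit the chain rule to rewrite the action of the matrix operator $u_t + A u_x$ on the curve $u = \Theta(u^1)$, reducing everything to a statement about the scalar $u^1$. First I would write, componentwise, $u^j(t,x) = \theta_j(u^1(t,x))$, so that by the chain rule $u^j_t = \theta'_j(u^1)\, u^1_t$ and $u^j_x = \theta'_j(u^1)\, u^1_x$; in vector form $u_t = \Theta'(u^1)\, u^1_t$ and $u_x = \Theta'(u^1)\, u^1_x$, where $\Theta' = (1,\theta'_2,\ldots,\theta'_N)^{\top}$ and the first component is $1$ because $\theta_1 = \Id$. Substituting into \eqref{PDE_A} gives
\[
\Theta'(u^1)\, u^1_t + A\,\Theta'(u^1)\, u^1_x = 0,
\]
an identity between two $N$-vectors that must hold at every $(t,x)$.

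Next I would use the hypothesis $u^1_x \neq 0$ to decouple the two terms. Along the trajectory, dividing by $u^1_x$ and setting $\mu(t,x) = u^1_t/u^1_x$, the identity reads $A\,\Theta'(u^1) = -\mu\,\Theta'(u^1)$, so $\Theta'(u^1)$ is a right eigenvector of $A$ with eigenvalue $-\mu$. Here one must be a little careful: $A$ in general depends on $(t,x)$ through $u = \Theta(u^1)$, i.e. $A = A(\Theta(u^1))$, so the eigenvalue $-\mu$ is a function of $u^1$ alone at points where $u^1$ takes that value; I would record this as $-\mu = \lambda(\Theta(u^1))$ for the branch $\lambda$ of the spectrum of $A$ selected by the eigenvector $\Theta'$. (Strictly, $\Theta'$ could vanish at isolated values of $u^1$; I would either assume $\Theta' \neq 0$, which is natural since $\theta_1 \equiv \Id$ already forces the first component to be $1\neq 0$, so in fact $\Theta'$ never vanishes, and this subtlety disappears.) Indeed the first component being identically $1$ is exactly what guarantees $\Theta'(u^1)$ is a genuine (nonzero) eigenvector, which is the clean part of the argument.

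Finally, feeding $\mu = u^1_t/u^1_x = -\lambda(\Theta(u^1))$ back in, I get $u^1_t + \lambda(\Theta(u^1))\, u^1_x = 0$, which is the claimed scalar quasilinear PDE. The main obstacle I anticipate is purely expository rather than mathematical: making precise the sense in which $A$, and hence $\lambda$, is a well-defined function of $u^1$ along the solution — this works because on the solution every quantity that enters $A$ (namely $F$, $\Gamma_u$, $H_u$ evaluated at $u = \Theta(u^1)$) is a function of $u^1$ by construction, so $\lambda \circ \Theta$ is the right object. One should also note that the eigenvalue branch may fail to be globally smooth if eigenvalues of $A(\Theta(u^1))$ cross; I would simply remark that the statement is local (or invoke smoothness of the simple-eigenvalue branch), since the lemma is only used as a necessary condition.
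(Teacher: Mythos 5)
Your proposal is correct and follows essentially the same route as the paper's own proof: apply the chain rule to get $u_t=\Theta'(u^1)u^1_t$ and $u_x=\Theta'(u^1)u^1_x$, substitute into \eqref{PDE_A}, and use $u^1_x\neq 0$ to identify $\Theta'$ as a right eigenvector of $A$ with eigenvalue $-u^1_t/u^1_x$, yielding the scalar PDE. Your added remarks (that $\theta_1=\Id$ forces $\Theta'\neq 0$, and that $A$ is a function of $u^1$ along the solution) are sensible clarifications of points the paper leaves implicit.
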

\begin{proof}
Note that
$u_t=u^1_t\Theta'$, and $u_x=u^1_x\Theta'$, thus plugging these identities into \eqref{PDE_A}, we get
$$
\Theta'(u^1) u^1_t + A \Theta'(u^1)u^1_x= 0,
$$
and since $u^1_x\neq 0$, we see that $\Theta'$ is one of the eigenvectors of $A$ with eigenvalue $-u^1_t/u^1_x$. Thus, $u^1$ must satisfy the PDE $u^1_t + \lambda(\Theta(u^1))u^1_x = 0$, where $\lambda$ is one of the the eigenvalues of $A$.
\end{proof}

Along the same lines that the above result, the following one establishes a necessary condition for the coestate variables of the equivalent control problem.

\begin{lemma}\label{lemma:lefteigenvector}
Assume that $(\ell, \rho=0, f, b)$ rationalize the MPNE $u=(u^1,\ldots,u^N)$. Then $(\gamma^i_1,\ldots,\gamma^i_N)$ is a left eigenvector of matrix $A$ with associated eigenvalue $f$, for each $i=1,\ldots,N$.
\end{lemma}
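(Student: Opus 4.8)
The plan is to imitate the proof of Lemma~\ref{lemma:curvesolution}, exploiting that the rationalizing control problem OC carries a single, scalar costate.

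First I would record what rationalization gives at the level of first--order conditions. Since $(\ell,\rho=0,f,b)$ admits $u=(u^1,\ldots,u^N)$ as an optimal control and $n=1$, the maximum principle for OC yields $\ell_{u^i}+f_{u^i}p=0$ at $u$ for every $i$, where $p=V_x$ is the scalar costate of OC; equivalently $\gamma^1(u)=\cdots=\gamma^N(u)=p$, which is \eqref{Theta} for $n=1$. Next I would differentiate the HJB equation of OC with respect to $x$, exactly as \eqref{PDE_MPNE} was obtained from \eqref{HJBgame}: with $\rho=0$ and $\ell_x=f_x=0$ the maximized Hamiltonian $\widehat h(p)=\max_u\{\ell(u)+f(u)p\}$ has $\widehat h'(p)=f(u^*(p))$ by the envelope theorem, so $V_t+\widehat h(V_x)=0$ reduces along the optimal path to the scalar conservation law $V_{tx}+f(u)V_{xx}=0$. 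Since $V_x=\gamma^i(u(t,x))$, the chain rule gives $V_{tx}=\gamma^i_u u_t$ and $V_{xx}=\gamma^i_u u_x$, where $\gamma^i_u=(\gamma^i_1,\ldots,\gamma^i_N)$, hence $\gamma^i_u(u_t+f u_x)=0$ for $i=1,\ldots,N$.

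The game's homogeneous system \eqref{PDE_A}, $u_t=-Au_x$, now turns this into $\gamma^i_u A\,u_x=f\,\gamma^i_u\,u_x$, i.e.\ the row vector $\gamma^i_u(A-fI_N)$ annihilates $u_x$. To promote this to the full statement $\gamma^i_u A=f\gamma^i_u$ I would use the eigenstructure from Lemma~\ref{lemma:curvesolution}: there $u=\Theta(u^1)$ with $u^1_x\neq0$, so $u_x=u^1_x\Theta'$ with $\Theta'$ a right eigenvector of $A$, and comparing the scalar PDE $u^1_t+\lambda(\Theta(u^1))u^1_x=0$ of that lemma with $u^1_t+fu^1_x=0$ coming from the conservation law forces $\lambda=f$. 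Stacking the $N$ identities as $G(A-fI_N)u_x=0$ with $G=(\gamma^i_{u^j})$, and using that $G$ is nonsingular --- which follows from $\gamma^i_{u^j}=-f_{u^i}^{-1}(h_{uu})_{ij}$ at the optimum together with nonsingularity of $h_{uu}$ --- recovers $(A-fI_N)u_x=0$; then, writing $A=FI_N+\Gamma_u^{-1}H_u$ with $\Gamma^i_{u^j}=-F_{u^i}^{-1}H^i_{u^iu^j}$, so that both $\gamma^i_u$ and $A$ are built from Hessians of the respective Hamiltonians, the desired left--eigenvector identity reduces to a symmetric--matrix identity forced by $(A-fI_N)u_x=0$ and the symmetry of $h_{uu}$.

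I expect this last step to be the main obstacle: the bare pointwise relation $\gamma^i_u(A-fI_N)u_x=0$ only tests the $\Theta'$--direction, so reconstructing the full eigenvector identity genuinely requires the nonsingularity of $(\gamma^i_{u^j})$ and the explicit second--derivative structure of $A$; by contrast, the HJB differentiation and chain--rule steps that precede it are routine.
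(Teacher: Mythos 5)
Your first two steps are exactly the paper's proof. The paper simply quotes the two PDE systems that the MPNE must satisfy simultaneously --- \eqref{PDE_A} for the game, i.e.\ $u_t+Au_x=0$, and \eqref{PDE_OC} for the control problem, which with $\rho=0$ and $h_x=\gamma_x=0$ reduces to $\gamma^i_u(u_t+fu_x)=0$ --- and left--multiplies the former by $\gamma^i_u$; your re--derivation of the second system from the HJB equation of OC is a longer route to the same relation $\gamma^i_u\bigl(A-fI_N\bigr)u_x=0$. Where you part company with the paper is in how that relation is upgraded to $\gamma^i_uA=f\gamma^i_u$. The paper does it by matching the two differential operators term by term (its ``if and only if''), i.e.\ it takes the rationalization to be an identity of the PDE systems rather than a coincidence along one particular solution; you instead try to deduce the full identity from the pointwise relation, and you are right to flag that step as the obstacle --- but your proposed repair does not work. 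Stacking the $N$ relations into $G(A-fI_N)u_x=0$ and inverting $G$ only yields $(A-fI_N)u_x=0$, i.e.\ that $u_x$ is a \emph{right} eigenvector of $A$ with eigenvalue $f$; that is just Lemma~\ref{lemma:curvesolution} again (with $\lambda=f$) and says nothing about the rows $\gamma^i_u$. Nor is there a ``symmetric--matrix identity'' to invoke: $A$ is not symmetric in general (in Lemma~\ref{lemma:eigen} its off--diagonal entries are $E_1-E_{12}$ and $E_2-E_{21}$, which differ), so a right eigenvector cannot be converted into a left one, and knowing one right eigenvector for the eigenvalue $f$ never identifies \emph{which} row vector spans the corresponding left eigenspace.

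Also note a tension in your auxiliary hypotheses: you use \eqref{Theta}, $\gamma^1=\cdots=\gamma^N$, alongside nonsingularity of $G=(\gamma^i_{u^j})$. These are compatible only because \eqref{Theta} holds on the solution manifold and not identically in $u$; but then the conclusion of the lemma --- that \emph{every} $\gamma^i_u$ is a left eigenvector for the \emph{same} simple eigenvalue $f$, hence that all these rows are proportional --- is itself a strong constraint that cannot follow from the single directional relation $\gamma^i_u(A-fI_N)u_x=0$. The honest reading is that the lemma is a necessary condition extracted by requiring the OC operator $\gamma^i_u(\partial_t+f\partial_x)$ to \emph{be} the $\gamma^i_u$--contraction of the game operator $\partial_t+A\partial_x$; if you present it that way (as the paper implicitly does), the proof is two lines and none of your stacking or symmetry machinery is needed.
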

\begin{proof}
By the hypotheses in the lemma, the MPNE satisfies both systems $u_t + A u_x = 0$ and $\gamma^i_u u_t + f \gamma^i_u u_x=0$, for $i=1,\ldots,N$. Multiplying the former system by $\gamma^i_u$, the latter system results if and only if the relation
$$
\gamma^i_u A = f\gamma^i_u
$$
holds. But this is the definition of left eigenvalue of $A$ with eigenvector $f$.
\end{proof}

\section{Asymmetric oligopoly}\label{AppB}
Asymmetry makes the problem difficult to handle.
For this reason, we consider ourselves to an asymmetric duopoly, $N=2$, and to infinite horizon; let us denote $u=u^1$ and $v=u^2$, to simplify notation. 
Suppose that the heterogeneity of the players is on self consumption, but they value the externality the same way. The preferences of the players are given by
$$
L^1(u,v)=(1-\alpha_1)^{-1}u^{1-\alpha_1}v^{1-\beta};\qquad L^2(u,v)=(1-\alpha_2)^{-1}v^{1-\alpha_2}u^{1-\beta}.
$$

Suppressing the time dependence term of the PDEs \eqref{PDE} for the MPNE, since we are in the infinite horizon game, the system becomes, assuming that $\alpha_1\alpha_2-(1-\beta)^2\neq 0$ and letting $\epsilon = (\alpha_1\alpha_2-(1-\beta)^2)^{-1}$

{\small
\begin{align}\label{Assym}
&\left(\begin{array}{cc}
-u-v +\epsilon \left(\frac{1-\beta}{1-\alpha_2}\right)\,\left((1-\alpha_2)u-(1-\beta)\,v\right) &
\epsilon \frac{\alpha_2}{1-\alpha_1}\frac  u{v}((1-\alpha_1)v-(1-\beta)u)
\\[2ex]
\epsilon \frac{\alpha_1}{1-\alpha_2}\frac  {v}u((1-\alpha_2)u-(1-\beta)\,v)
 &-u-v +\epsilon \left(\frac{1-\beta}{1-\alpha_1}\right)\,\left((1-\alpha_1)\,v-(1-\beta)u\right)\end{array}\right)
\left(\begin{array}{c} u_x\\[1ex]
v_x\end{array}\right)\\[1ex]
&\qquad\qquad = \left(\begin{array}{c} -{\epsilon\,\left(\alpha_2\,r_1+(1-\beta)\,r_2\right)u}\\[1ex] -\epsilon\,\left(\alpha_1\,r_2+(1-\beta)\,r_1\right)v\end{array}\right).\nonumber
\end{align}
}
Lemma \ref{lemma:Theta} in Section \ref{Sect_necc} in Appendix \ref{AppA} proves that a necessary condition for a MPNE to be the solution of an equivalent optimal control problem, is that there is 
a relation $v=\theta(t,x,u)$; in the game we are analyzing, where there is no explicit $t$ and $x$ dependence, the relation is simply $v=\theta(u)$ for suitable $\theta$. Of course, when the game is symmetric, as in the previous section, $\theta$ is the identity, but under asymmetry, $\theta$ is unknown.\footnote{However, we prove in Appendix \ref{AppA} that, if there is no discount factor in the game (and thus, the horizon is finite), $\theta$ can be nicely characterized in simple algebraic terms, see Lemma \ref{lemma:curvesolution}. This will be illustrated in Appendix \ref{AppC}, where we deal with the game with additive externalities. Here we continue working with Cobb--Douglas, multiplicative preferences.}
Plugging $v=\theta(u)$ into the above system, and given that $v_x=\theta' u_x$, the system becomes a pair of differential equations for only one unknown, $u(x)$, but where $\theta(u)$ is also unknown. The overdetermination can be used to find a differential equation for $\theta(u)$, 
by eliminating $u_x$. We simply take the ratio of both equations to get
{\small
\begin{align*}
&\frac{-(u+\theta) + \epsilon\left(\frac{1-\beta}{1-\alpha_2}\right)((1-\alpha_2)u-(1-\beta)\theta) + \epsilon \theta'\frac{u}{\theta}\left(\frac{\alpha_2}{1-\alpha_1}\right)((1-\alpha_1)\theta-(1-\beta )u)}{\epsilon \frac{\theta}{u}\left(\frac{\alpha_1}{1-\alpha_2}\right)((1-\alpha_2)u-(1-\beta)\theta) + \theta'\left(-(u+\theta)+\epsilon \left(\frac{1-\beta}{1-\alpha_1}\right)((1-\alpha_1)\theta-(1-\beta) u)\right)}\\
&\quad=\frac{u}{\theta}\left(\frac{\alpha_2 r_1+(1-\beta) r_2}{\alpha_1 r_2+(1-\beta) r_1}\right).
\end{align*}
}
This is an homogenous differential equation with an obvious solution candidate, $\theta(u)=\delta u$; for it to be sound, it is needed $\delta>0$.\footnote{The equation for $\delta$ is linear
$$
\frac{-(1+\delta) + \epsilon \left(\frac{1-\beta}{1-\alpha_2}\right)((1-\alpha_2)-(1-\beta)\delta) + \epsilon \delta\left(\frac{\alpha_2}{1-\alpha_1}\right)((1-\alpha_1)\delta-(1-\beta ))}{\epsilon \cancel{\delta}\left(\frac{\alpha_1}{1-\alpha_2}\right)((1-\alpha_2)-(1-\beta)\delta) - \cancel{\delta}(1+\delta)+ \epsilon \cancel{\delta}\left(\left(\frac{1-\beta}{1-\alpha_1}\right)((1-\alpha_1)\delta-(1-\beta) )\right)}=\frac{1}{ \cancel{\delta}}\left(\frac{\alpha_2 r_1+(1-\beta) r_2}{\alpha_1 r_2+(1-\beta) r_1}\right).
$$
}
Assuming values of the parameters such that a solution of this type exists, plugging it into \eqref{Assym}, we have that an optimal control problem $(\ell,f,\rho)$ rationalizes $u$ if, 
$
f(u) = \xi u
$
with\footnote{This comes from reading the first equation in the system \eqref{Assym}, taking the coefficient of $u_x$, and using \eqref{PDE_OC}. We follow the same steps than in the symmetric case studied in the previous section.}
$$
\xi = -(1+\delta) +\epsilon  \left(\frac{1-\beta}{1-\alpha_2}\right)((1-\alpha_2)-(1-\beta)\delta) + \epsilon \delta\left(\frac{\alpha_2}{1-\alpha_1}\right)((1-\alpha_1)\delta-(1-\beta ))
$$
and 
$$
\rho\frac{\gamma(u)}{\gamma'(u)} = -\epsilon \left(\alpha_2\,r_1+(1-\beta)\,r_2\right)u,
$$
where $\gamma (u) = -\frac{\ell'(u)}{f'(u)} = -\frac{\ell'(u)}{\xi}$. In this way, we get that the MPNE $u$ satisfies 
$$
f(u) u_x = \rho\frac{\gamma(u)}{\gamma'(u)},
$$
which is the ODE for the Markov control.

Clearly, $\gamma (u) =C u^{-\rho/(\alpha_2r_1+(1-\beta )r_2)}$ and hence 
$$
\ell(u) =-\lambda  C u^{1-\rho/(\alpha_2 r_1+(1-\beta )r_2)},
$$
where we have denoted generically different constants of integration by $C$. It is possible to choose $C$ with a suitable sign and $\rho>0$ with a suitable size to get an increasing strictly concave utility function.

\section{Additive externalities}\label{AppC}
In this section we assume that the agents' externality (if any) enters additively. Let
$$L^i(u^i,u_{-i}) = L_{io}(u^i) + \sum_{j\neq i} L_{ir}(u^j),$$
where $L_{io}$ (utility from own consumption, ``{\it o}") and $ L_{ir}$ (externality caused for the rest of players, ``{\it r}") are suitable smooth functions. Here, player $i$ values her own consumption with utility $ L_{io}$ and the consumption of the rest of the players additively, with the same utility $ L_{ir}$ for each player. We let $F(u^1,\ldots, u^N)=-\sum_{j=1}^N u^j$.

Following Remark \ref{PDEconservative} in Appendix \ref{AppA}, the costate function of player $i$ is
$
\Gamma^i(x,u^i)=L_{io}'(u^i)
$.
Thus, the matrix $\Gamma_u=(\Gamma^i_{u^j})$ is diagonal, with diagonal elements
$
L''_{1o}(u^1),\ldots,L''_{No}(u^N)
$.
The Hamiltonian is
$$
\Ha^i(x,(u^i|u_{-i}))=L_{io}(u^i) + \sum_{j\neq i} L_{ir}(u_j)-L_{io}'(u^i)\left(\sum_{j=1}^Nu^j\right),
$$
The matrix $A=FI_{N\times N} + \Gamma_u^{-1}H_u$ is then
$$
\left(\begin{array}{cccc} F(u) & 
E_{1}(u^1)-E_{12}(u^1,u^2)& \ldots &E_{1}(u^1)-E_{1N}(u^1,u^N) \\
E_{2}(u^2)-E_{21}(u^2,u^1) &F(u)& \ldots 
&E_{2}(u^2)-E_{2N}(u^2,u^N) \\
\vdots & \vdots & \ddots &\vdots\\
E_{N}(u^N)-E_{N1}(u^N,u^1)&E_{N}(u^N)-E_{N2}(u^N,u^2) & \ldots & F(u) \end{array}\right)
$$
\normalsize
where $E_{i}(u^i)=-\frac{L_{io}'(u^i)}{L_{io}''(u^i)}$ is the risk seeking index of Arrow-Pratt and $E_{ij}(u^i,u^j) = -\frac{L_{ir}'(u^j)}{L_{io}''(u^i)}$. Now it would be easy to obtain the PDE system for the MPNE.
Let us focus on the two player case, to show that in some instances there is no equivalence of the game with a control problem. We analyze the case of finite horizon and null discount factor. We will state first some auxiliary results based on Appendix \ref{AppA}.

%
%

\begin{lemma}\label{lemma:eigen}
Let the two person game with additive externalities and null discount described above. If 
$$
(E_{1}-E_{12})(E_{2}-E_{21})>0,
$$
then the matrix $A=FI_{2\times 2} + \Gamma_u^{-1}H_u$ admits two distinct real eigenvalues
\begin{align*}
\lambda &= F + \sqrt{(E_{1}-E_{12})(E_{2}-E_{21})},\\
\mu &= F - \sqrt{(E_{1}-E_{12})(E_{2}-E_{21})}
\end{align*}
with associated eigenspaces $S(\lambda)$ and $S(\mu)$ generated by
$$
s_{\lambda}(u^1,u^2) = \left(1,\frac{\sqrt{|E_{2}-E_{21}|}}{\sqrt{|E_{1}-E_{12}|}}\right)
$$
and
$$
s_{\mu}(u^1,u^2) =\left(1,-\frac{\sqrt{|E_{2}-E_{21}|}}{\sqrt{|E_{1}-E_{12}|}}\right),
$$
respectively. When 
$$
(E_{1}-E_{12})(E_{2}-E_{21})<0,
$$
the matrix $A$ has no real eigenvalues.
\end{lemma}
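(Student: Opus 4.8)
The plan is to diagonalise the $2\times2$ matrix $A$ displayed just above the lemma by peeling off its (constant) diagonal. Write $A=F\,I_{2\times2}+C$, where, abbreviating $a=E_{1}-E_{12}$ and $b=E_{2}-E_{21}$ (all arguments suppressed),
\[
C=\left(\begin{array}{cc} 0 & a\\ b & 0\end{array}\right).
\]
Because $C$ and $A$ differ by a scalar multiple of the identity, they have identical eigenvectors, and each eigenvalue of $A$ equals $F$ plus the corresponding eigenvalue of $C$. Hence everything reduces to the textbook spectral analysis of a $2\times2$ matrix with zero diagonal, and I would organise the write-up around that.

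First I would compute the characteristic polynomial $\det(C-\nu I_{2\times2})=\nu^{2}-ab$, so that the eigenvalues of $C$ are $\nu=\pm\sqrt{ab}=\pm\sqrt{(E_{1}-E_{12})(E_{2}-E_{21})}$. If $ab>0$ these are two real numbers, and they are distinct since $ab\neq0$; adding $F$ yields the stated $\lambda=F+\sqrt{ab}$ and $\mu=F-\sqrt{ab}$, real and distinct. If $ab<0$, the equation $\nu^{2}=ab$ has no real solution, so neither $C$ nor $A$ has a real eigenvalue. This disposes of the first and last claims.

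For the eigenspace $S(\lambda)$ I would solve $(C-\sqrt{ab}\,I_{2\times2})s=0$; its first row is $-\sqrt{ab}\,s_{1}+a\,s_{2}=0$, and normalising $s_{1}=1$ gives $s_{2}=\sqrt{ab}/a$. Since $ab>0$ forces $a$ and $b$ to share a sign, one has $\sqrt{ab}/a=\sqrt{|b|}/\sqrt{|a|}$, which is precisely the claimed generator $s_{\lambda}=\bigl(1,\sqrt{|E_{2}-E_{21}|}/\sqrt{|E_{1}-E_{12}|}\bigr)$. The second row of the $2\times2$ system is then automatically satisfied because $\det(C-\sqrt{ab}\,I_{2\times2})=0$. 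Replacing $\sqrt{ab}$ by $-\sqrt{ab}$ gives $S(\mu)$ and its generator $s_{\mu}$ in exactly the same way.

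The one point needing a little care — and the main, essentially cosmetic, obstacle — is the identity $\sqrt{ab}/a=\sqrt{|b|}/\sqrt{|a|}$: it holds with a plus sign when $a,b>0$ but acquires a minus sign when $a,b<0$, in which case the generators $s_{\lambda}$ and $s_{\mu}$ are merely interchanged. Since only the eigenlines $S(\lambda)$ and $S(\mu)$, and not the particular generators chosen, are used in the subsequent application of the lemma (via Lemma~\ref{lemma:curvesolution}), this sign bookkeeping is immaterial downstream; under the sign convention $E_{1}-E_{12}>0$ and $E_{2}-E_{21}>0$ the formulas hold verbatim as stated.
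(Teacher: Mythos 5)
Your proof is correct and follows essentially the same route as the paper's: both compute the characteristic polynomial $(F-\lambda)^{2}-(E_{1}-E_{12})(E_{2}-E_{21})$ of the $2\times2$ matrix and then solve the singular linear system for the eigenvectors, so writing $A=F\,I_{2\times2}+C$ first is only a cosmetic repackaging of that computation. Your observation that the stated generators $s_{\lambda}$ and $s_{\mu}$ are interchanged when $E_{1}-E_{12}$ and $E_{2}-E_{21}$ are both negative is a genuine (though harmless) imprecision in the lemma's statement that the paper's own proof does not flag.
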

\begin{proof}
Remember that a non-null vector $(x,y)^{\top}$ is an eigenvector of
$$A= 
\left(
\begin{array}{cc}
 F &  E_{1}-E_{12}   \\
  E_{2}-E_{21}    &  F\end{array}
\right)
$$ with eigenvalue $\lambda$ iff $\det(A-\lambda I_{2\times 2})=0$ and $(A-\lambda I_{2\times 2})(x,y)^{\top}=(0,0)^{\top}$. Noting that 
$$
\det(A-\lambda I_{2\times 2}) = (F-\lambda)^2 - (E_{1}-E_{12} )(  E_{2}-E_{21} ),
$$
we obtain $\lambda$ and $\mu$ under the premise of the lemma. It is straightforward to find the eigenspaces for $\lambda$ and $\mu$. For instance
$
(A-\lambda I_{2\times 2})(x,y)^{\top}=(0,0)^{\top} $ iff
$$
 \left(
\begin{array}{cc}
 -\sqrt{(E_{1}-E_{12})(E_{2}-E_{21} )} &  E_{1}-E_{12}   \\
  E_{2}-E_{21}    &  -\sqrt{(E_{1}-E_{12})(E_{2}-E_{21} )} F\end{array}
\right) \left(
\begin{array}{c} x\\ y \end{array} 
\right) = \left(
\begin{array}{c} 0\\ 0 \end{array} 
\right),
$$
from which we get the expression for the first eigenvector in the lemma. The case for $\mu$ is similar.
\end{proof}
Now we establish necessary conditions that an oligopoly as described in this paper, is equivalent to a monopoly problem within the same class (nonrenewable resource, no discount factor, finite horizon).
\begin{proposition}\label{Prop:thetaODE}
Let the two person game with additive externality and null discount described above and let $u=(u^1,u^2)$ be a MPNE. A necessary condition for $u$ to be rationalized by a monopolistic nonrenewable resource model is 
\begin{equation}\label{Positive}
(E_{1}-E_{12})(E_{2}-E_{21})\ge 0.
\end{equation}
When the above inequality is strict, the pair $(u^1,u^2)$ is linked by $u^2=\theta(u^1)$, where $\theta$ satisfies one of the ODEs
\begin{equation}\label{ODE+}
\theta' (u^1) = \frac{\sqrt{|E_{2}(u^1)-E_{21}(\theta)|}}{\sqrt{|E_{1}(u^1)-E_{12}(\theta)|}}
\end{equation}
or
\begin{equation}\label{ODE-}
\theta' (u^1) = -\frac{\sqrt{|E_{2}(u^1)-E_{21}(\theta)|}}{\sqrt{|E_{1}(u^1)-E_{12}(\theta)|}}.
\end{equation}
Moreover, the dynamics of the control problem $f$ is one of the two possibilities (the positive or the negative one below)
\begin{equation}\label{f}
f = F \pm \sqrt{(E_{1}-E_{12})(E_{2}-E_{21})}.
\end{equation}
Also, the bequest functions $B^1$ and $B^2$ are linked as follows:
\begin{equation}\label{S1S2}
((L_2')^{-1}\circ B^2_x)(x) = (\theta\circ (L_1')^{-1}\circ B^1_x)(x),\quad x>0.
\end{equation}
On the other hand, if 
\begin{equation}\label{Negative}
(E_{1}-E_{12})(E_{2}-E_{21})<0,
\end{equation}
then no monopolistic nonrenewable resource model may rationalize the MPNE.
\end{proposition}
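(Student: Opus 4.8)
The plan is to chain the three lemmas of Appendix \ref{AppA}, working throughout in the scalar ($n=1$), $t$- and $x$-autonomous, null-discount regime in which the MPNE system \eqref{PDE_MPNE} collapses to the homogeneous quasilinear system $u_t + Au_x = 0$ (equation \eqref{PDE_A}), with $A = FI_{2\times 2} + \Gamma_u^{-1}H_u$ the matrix written out explicitly at the start of this section. Suppose $u=(u^1,u^2)$ is a MPNE rationalized by some admissible $\mathrm{OC} = (\ell,\rho=0,f,b)$. First I would apply Lemma \ref{lemma:Theta} (with $N=2$ and no explicit $t,x$ dependence): the coestate equality \eqref{Theta} then defines $u^2 = \theta(u^1)$ for a smooth $\theta$, globally by our standing assumption. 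Assuming the non-degeneracy $u^1_x\neq 0$ that Lemma \ref{lemma:curvesolution} requires, that lemma applied to $\Theta = (\mathrm{id},\theta)$ gives that $\Theta' = (1,\theta')^{\top}$ is a right eigenvector of $A$ and that $u^1$ solves $u^1_t + \lambda(\Theta(u^1))\,u^1_x = 0$ with $\lambda$ the associated eigenvalue. Hence $A$ has a real eigenvalue along the equilibrium, so Lemma \ref{lemma:eigen} forces $(E_{1}-E_{12})(E_{2}-E_{21})\ge 0$, which is \eqref{Positive}. The final assertion of the proposition --- that \eqref{Negative} precludes rationalization --- is then merely the contrapositive: under \eqref{Negative} Lemma \ref{lemma:eigen} says $A$ has no real eigenvector, contradicting the conclusion just drawn from Lemma \ref{lemma:curvesolution}.

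When the product is strictly positive, Lemma \ref{lemma:eigen} supplies the two eigenvectors $s_\lambda$ and $s_\mu$ explicitly; since $\Theta'$ has first entry $1$ it must coincide with one of them, and equating second entries (after substituting $u^2 = \theta(u^1)$, so that $E_2$ and $E_{21}$ are evaluated ``at $\theta$'' and $E_1$, $E_{12}$ carry the argument $u^1$) yields the two alternatives \eqref{ODE+} and \eqref{ODE-}. For the dynamics, Lemma \ref{lemma:lefteigenvector} gives that each $(\gamma^i_1,\gamma^i_2)$ is a left eigenvector of $A$ with eigenvalue $f$; since the left and right eigenvalues of a square matrix coincide, $f$ is an eigenvalue of $A$, i.e.\ $f = F \pm \sqrt{(E_{1}-E_{12})(E_{2}-E_{21})}$, which is \eqref{f} --- the sign agreeing with the one picked in \eqref{ODE+}/\eqref{ODE-}, since both come from the same eigenpair (equivalently, compare $u^1_t + f\,u^1_x = 0$ with the equation for $u^1$ obtained above). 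Finally, for \eqref{S1S2} I would use the game's terminal condition \eqref{curva}: here $L^i_{u^i} = L_{io}'(u^i)$ and $F_{u^i}\equiv -1$, so \eqref{curva} reads $L_{io}'(u^i(T,x)) = B^i_x(x)$, hence $u^i(T,x) = (L_{io}')^{-1}(B^i_x(x))$; substituting this into the pointwise identity $u^2(T,x) = \theta(u^1(T,x))$ produces \eqref{S1S2}.

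The main obstacle is bookkeeping rather than conceptual: keeping the arguments of $E_1,E_2,E_{12},E_{21}$ straight once $u^2 = \theta(u^1)$ is imposed, and verifying that the sign choices in $\theta'$, in $f$, and in the eigenvalue invoked by Lemma \ref{lemma:curvesolution} are mutually consistent. The one point that genuinely needs care is the non-degeneracy hypothesis $u^1_x\neq 0$ needed to invoke Lemma \ref{lemma:curvesolution}: I would either argue that the complementary case collapses to constant equilibrium strategies (and hence constant $B^i_x$), which is uninteresting, or simply fold $u^1_x\neq 0$ into the standing hypotheses, since without it the eigenvalue machinery never gets started.
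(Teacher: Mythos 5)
Your proposal is correct and follows essentially the same route as the paper's proof: chaining Lemma \ref{lemma:Theta}, Lemma \ref{lemma:curvesolution}, Lemma \ref{lemma:eigen} and Lemma \ref{lemma:lefteigenvector}, and reading \eqref{S1S2} off the terminal condition \eqref{curva}. You are in fact slightly more explicit than the paper about the non-degeneracy hypothesis $u^1_x\neq 0$ and about where the relation $u^2=\theta(u^1)$ comes from, but this is a refinement of the same argument, not a different one.
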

\begin{proof}
Condition \eqref{Positive} implies the existence of eigenvectors of the matrix $A$ by Lemma \ref{lemma:eigen} and then Lemma \ref{lemma:curvesolution} in  Appendix \ref{AppA} implies that if rationalization is possible, then $u^2=\theta(u^1)$, with $(1,\theta'(u^1))$ being one of the eigenvectors of $A$, which must be either $s_{\lambda}$ or $s_{\mu}$ defined in Lemma \ref{lemma:eigen}. These two possibilities lead to one of the ODEs for $\theta'$ displayed in the theorem. Now, by Lemma \ref{lemma:lefteigenvector} in Appendix \ref{AppA}, $f$ must be one of the eigenvalues of $A$, $\lambda$ or $\mu$.\footnote{Observe that they are obviously different from $F$ under the hypotheses of the theorem.}

Regarding \eqref{S1S2}, when $x>0$, the final conditions satisfied by the pair $(u^1,u^2)$ at time $T$ are given by $L_i'(u^i) = B^i_x(x)$, $i=1,2$. Plugging $u^2=\theta (u^1)$ and eliminating $u^1$, we obtain the identity shown. 

Finally, \eqref{Negative} implies that no suitable $\theta$ may exist satisfying $u^2=\theta(u^1)$, as the matrix $A$ has no eigenvectors, thus the necessary condition for rationalizing the MPNE established in Lemma \ref{lemma:curvesolution} does not hold.
\end{proof}

The previous result allows to identify preferences of the players that cannot be replicated in a single--agent decision problem. These preferences must show some degree of asymmetry, in the sense that at least one of the players is jealous of the achievements of the other player (or, in another interpretation, that consumption of the other player is a negative externality).

\begin{theorem}\label{th:NO}
Let the two person game with additive externality and null discount factor described above and let $u=(u^1,u^2)$ be a MPNE. When the externality affecting to one of the oligopolist is negative, there are specifications of the preferences such that the game cannot be rationalized as a monopolistic nonrenewable resource model.
\end{theorem}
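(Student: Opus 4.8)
The plan is to exhibit a concrete family of preferences for which the sign condition \eqref{Negative} of Proposition \ref{Prop:thetaODE} holds, so that the necessary condition for rationalization fails. Recall that for the additive-externality duopoly the relevant quantities are $E_1(u^1) = -L_{1o}'(u^1)/L_{1o}''(u^1)$, $E_{12}(u^1,u^2) = -L_{1r}'(u^2)/L_{1o}''(u^1)$, and symmetrically $E_2(u^2) = -L_{2o}'(u^2)/L_{2o}''(u^2)$, $E_{21}(u^2,u^1) = -L_{2r}'(u^1)/L_{2o}''(u^2)$. By Proposition \ref{Prop:thetaODE}, a MPNE can be rationalized only if $(E_1 - E_{12})(E_2 - E_{21}) \ge 0$ along the equilibrium; so it suffices to choose $L_{1o}, L_{1r}, L_{2o}, L_{2r}$ such that at (or near) any candidate MPNE one factor is strictly positive and the other strictly negative.

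First I would isolate what each factor means. We have $E_1 - E_{12} = -(L_{1o}'(u^1) - L_{1r}'(u^2))/L_{1o}''(u^1)$, and since $L_{1o}'' < 0$ (strict concavity in own consumption, as assumed in Section 4), the sign of $E_1 - E_{12}$ equals the sign of $L_{1o}'(u^1) - L_{1r}'(u^2)$; similarly the sign of $E_2 - E_{21}$ equals the sign of $L_{2o}'(u^2) - L_{2r}'(u^1)$. The idea is to make player 2 suffer a negative externality strong enough that $L_{2r}'(u^1) > L_{2o}'(u^2)$ (so $E_2 - E_{21} < 0$), while player 1 has no externality or a mild positive one, so $L_{1o}'(u^1) - L_{1r}'(u^2) > 0$, i.e. $E_1 - E_{12} > 0$. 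For instance take $L_{1r} \equiv 0$ (so $E_{12} = 0$ and $E_1 - E_{12} = E_1 > 0$ automatically by the standing assumption $E_i > 0$), and take $L_{2r}$ decreasing and steep — say $L_{2o}(u) = \log u$ and $L_{2r}(u) = -c u$ with $c$ large, or polynomial analogues — so that on the relevant range of equilibrium values $L_{2r}'(u^1) = -c < L_{2o}'(u^2)$ is impossible but rather $-L_{2r}'(u^1) = c$ is large and positive while... here one must be careful with signs: $L_{2o}'(u^2) - L_{2r}'(u^1) = L_{2o}'(u^2) + c$, which is positive, giving the wrong sign. So instead one wants $L_{2r}' > 0$ but with $L_{2r}'(u^1) > L_{2o}'(u^2)$ — a positive-marginal but "jealousy" externality is not the right reading; rather the negative externality should be modeled so that the \emph{combination} $E_2 - E_{21}$ flips sign. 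The cleanest route is to pick $L_{2r}$ with $L_{2r}' > 0$ large (player 2's marginal valuation of player 1's consumption exceeds player 2's own marginal utility at equilibrium), which by the formula makes $E_2 - E_{21} = -(L_{2o}'(u^2) - L_{2r}'(u^1))/L_{2o}''(u^2) < 0$ since the numerator $L_{2o}'(u^2) - L_{2r}'(u^1) < 0$ and $L_{2o}'' < 0$. Concretely, $L_{2o}(u) = u^{1-\alpha}/(1-\alpha)$ and $L_{2r}(u) = \kappa u^{1-\beta}/(1-\beta)$ with $\kappa$ chosen large will do it on a neighborhood of any symmetric-ish point.

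The key steps, in order, are: (1) write down the explicit family of $L^i$, compute $L_{io}', L_{io}'', L_{ir}'$; (2) express $E_1 - E_{12}$ and $E_2 - E_{21}$ as above, reducing the sign question to the signs of $L_{1o}'(u^1) - L_{1r}'(u^2)$ and $L_{2o}'(u^2) - L_{2r}'(u^1)$ using $L_{io}'' < 0$; (3) argue that a MPNE $(u^1, u^2)$ exists for these parameters (this is \emph{taken for granted} by the blanket existence assumption stated in the Introduction, so it need only be invoked), and that the equilibrium values lie in the range where the two marginal-utility comparisons have opposite signs — this can be forced by choosing the externality coefficient $\kappa$ large relative to the scale of the state/horizon, so that regardless of where the MPNE sits the sign pattern is pinned down; (4) conclude by Proposition \ref{Prop:thetaODE} that, since $(E_1 - E_{12})(E_2 - E_{21}) < 0$ at the MPNE, inequality \eqref{Negative} holds and no monopolistic nonrenewable-resource model rationalizes $u$.

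The main obstacle is step (3): we must ensure the sign pattern $(E_1 - E_{12})(E_2 - E_{21}) < 0$ holds \emph{at the actual MPNE}, not merely somewhere in the state space, and we do not have an explicit formula for the MPNE here. I expect the fix is a robustness argument — choosing the "jealousy" parameter large enough that the inequality $L_{2r}'(u^1) > L_{2o}'(u^2)$ is forced for \emph{all} feasible consumption pairs compatible with the resource constraint on $[0,T]$ (since total extraction is bounded by the initial stock $x$, the equilibrium rates live in a compact set, on which a uniform gap can be engineered), while $L_{1r} \equiv 0$ makes the first factor positive with no further work. If a fully uniform bound is awkward, an alternative is to invoke the final conditions \eqref{FCgame}: at $t = T$ the MPNE satisfies $L_i'(u^i) = B^i_x(x)$, which for a chosen bequest $B^i$ pins down $(u^1(T,x), u^2(T,x))$ explicitly, so one checks the sign pattern there and notes that by continuity it persists on a time interval near $T$, which already contradicts rationalization of the MPNE on the whole subgame family. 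Either way the computation itself is routine once the family is fixed; the care is entirely in the choice of parameters.
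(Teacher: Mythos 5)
Your overall strategy is the same as the paper's: exhibit explicit preferences for which $(E_{1}-E_{12})(E_{2}-E_{21})<0$ and invoke Proposition \ref{Prop:thetaODE}. The paper's witness is $L_{io}(u^i)=-e^{-\alpha_i u^i}$, $L_{1r}\equiv 0$, $L_{2r}(u^1)=-du^1$, and the point of that choice is that (as the paper computes them) $E_1,E_2,E_{12},E_{21}$ are all \emph{constants}, so the sign of the product is fixed once and for all and the question of where the MPNE actually sits never arises. That is precisely the obstacle you single out in your step (3), and your proposed compactness fix does not go through as stated: the resource constraint bounds $\int_t^T(u^1+u^2)\,ds$ by the initial stock, not the extraction \emph{rates} pointwise, so the equilibrium rates need not live in a compact set on which a uniform gap can be engineered. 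Your alternative fix --- pinning down $(u^1(T,x),u^2(T,x))$ via the terminal condition \eqref{curva} and arguing by continuity on subgames near $t=T$ --- is sound, but the cleaner repair is simply to choose functional forms making the sign of the product constant on the whole control region (e.g.\ CARA own-utility together with a linear externality whose slope dominates $\sup L_{2o}'$), which is exactly what the paper's example is engineered to do.

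There is, however, a substantive sign issue that your careful bookkeeping has surfaced and that you must resolve before the proof can be closed. Under the literal definitions $E_{ij}=-L_{ir}'(u^j)/L_{io}''(u^i)$ with $L_{io}''<0$, your reduction is correct: $E_2-E_{21}$ has the sign of $L_{2o}'(u^2)-L_{2r}'(u^1)$, so with $L_{1r}\equiv 0$ (whence $E_1-E_{12}=E_1>0$) the product can only turn negative if $L_{2r}'(u^1)>L_{2o}'(u^2)>0$, i.e.\ a sufficiently strong \emph{positive} marginal externality. This is at odds both with the theorem's phrasing (``negative externality'') and with the paper's own computation, where $L_{2r}'=-d<0$ and yet $E_{21}$ is asserted to equal $d>0$; under the stated definition one gets $E_{21}=d/L_{2o}''(u^2)<0$ and hence $E_2-E_{21}>0$, so the paper's example as written does \emph{not} satisfy \eqref{Negative}. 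One of the two conventions carries a sign slip, and your argument cannot be finalized until you decide which: either the definition of $E_{ij}$ is meant with the opposite sign and the negative-externality example is the intended witness, or your ``jealousy'' (strong positive marginal externality) example is the correct one and the theorem's wording needs adjusting. Either way the logical skeleton --- Lemma \ref{lemma:eigen}, Proposition \ref{Prop:thetaODE}, explicit preferences violating \eqref{Positive} uniformly in $(u^1,u^2)$ --- should be the one you follow.
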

\begin{proof}
Let $L_{io}(u^i)=-e^{-\alpha_i u^i}$, with $\alpha _ i>0$, for $i=1,2$, $L_{1r}(u^2)=0$ and let $L_{2r}(u^1)= -d u^1$, where $d  \alpha_2>1$ and $d>0$ (hence, player 2 sees player 1 consumption as a negative externality). Note that $E_{i}(u^i) = \frac 1{\alpha_i}$, $E_{12}(u^2)=0$ and $E_{21}(u^1)=d$. Then 
$$
(E_{1}-E_{12})(E_{2}-E_{21}) = \frac 1{\alpha_1} \left(\frac 1{\alpha_2} - d\right) <0.
$$
Thus, \eqref{Negative} holds, and hence by Proposition \ref{Prop:thetaODE}, the MPNE cannot be rationalized as the solution of a monopolistic game.
\end{proof}

%
%
%

It is worth noting that in the problem that we are analyzing, the dynamics of the control problem should be given by $\lambda$ or $\mu$ defined in the above theorem, and not by the original $F$. It comes as a surprise that the game cannot be put as a control problem if we insist in maintaining the original dynamics, $F$. The only exception is when the game is a team problem, where both players have the same objective, since then $H^i_{u^j}=0$ and the matrix is diagonal, with diagonal $(F, F)$, and thus the only eigenvalue of $A$ is obviously $F$.

The following theorem is a positive result, in the sense that it establishes conditions making possible the rationalization of the MPNE in a duopoly with additive externalities.
\begin{theorem}
Let the two person game with additive externality and null discount factor described above and let $u=(u^1,u^2)$ be a MPNE such that \begin{enumerate}
\item Inequality
\eqref{Positive} holds;
\item 
ODE \eqref{ODE+} or ODE \eqref{ODE-} admit a feasible solution $\theta (u) \ge 0$;
\item Identity
\eqref{S1S2} holds;
\item
Letting $f(u)=\lambda(u,\theta(u))$ or $f(u)=\mu(u,\theta(u))$, $f$ is strictly monotone and concave or convex.
\end{enumerate}
Then, there is a strictly concave payoff function $\ell$ and a bequest function $b$ such that the monopolistic nonrenewable resource model $(\ell,\rho = 0,f,b)$ implements the MPNE.
\end{theorem}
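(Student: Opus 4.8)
The plan is to argue constructively. Conditions (1)--(3) will pin down the dynamics $f$, the reduction $u^2=\theta(u^1)$, and the terminal data of the equilibrium, and condition (4) is what will allow a strictly concave payoff $\ell$ and a bequest $b$ to be manufactured so that the Hamilton--Jacobi--Bellman sufficiency conditions for OC hold at the MPNE. First I would reduce the game to a scalar problem: by (1) and Lemma~\ref{lemma:eigen}, $A$ has real eigenvalues $\lambda,\mu$ with eigenvector fields $s_\lambda,s_\mu$; by (2), one of the ODEs \eqref{ODE+}, \eqref{ODE-} has a feasible solution $\theta\ge 0$, so along $\mathcal C:=\{u^2=\theta(u^1)\}$ the field $(1,\theta'(u^1))$ equals one of $s_\lambda,s_\mu$ --- say $s_\lambda$, the other case being identical with $\mu$ replacing $\lambda$. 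Put $f(u^1,u^2):=\lambda(u^1,u^2)$ and $\tilde f(u^1):=f(u^1,\theta(u^1))$; by (4) $\tilde f$ is strictly monotone, has second derivative of constant sign, and $\tilde f'\neq 0$. The MPNE solves \eqref{PDE_A}, and by (3) (which is \eqref{S1S2}) its terminal slice at $s=T$ takes values in $\mathcal C$; since $\mathcal C$ is an integral curve of an eigenvector field of the strictly hyperbolic system \eqref{PDE_A}, the classical solution with such terminal data is a simple wave supported on $\mathcal C$, so $u^2\equiv\theta(u^1)$ on $[0,T]\times X$. Lemma~\ref{lemma:curvesolution} then gives $u^1_t+\tilde f(u^1)\,u^1_x=0$ with terminal value $\varphi(x):=\big((L_{1o}')^{-1}\circ B^1_x\big)(x)$, well defined by the final conditions \eqref{curva}.

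Next I would construct the costate, the payoff and the bequest. Since $\rho=0$ and neither $\ell$ nor $f$ carries explicit state dependence, the relation $\rho\,\gamma/\gamma'=0$ that \eqref{control} imposes on the control problem is vacuous, so any smooth strictly monotone $\gamma_0:U^1\to\mathbb{R}$ may serve as the common costate along $\mathcal C$, with its sign and slope fixed below. I would define $\ell$ by prescribing its gradient on $\mathcal C$,
\[
\nabla\ell\big(u^1,\theta(u^1)\big)=-\gamma_0(u^1)\,\nabla f\big(u^1,\theta(u^1)\big),
\]
so that $\gamma^1=\gamma^2=\gamma_0$ there in the sense of \eqref{Theta}, and extend $\ell$ off $\mathcal C$ by a strictly concave ``ridge'',
\[
\ell(u^1,u^2)=\ell(\tau(s))+\big\langle\nabla\ell(\tau(s)),(u^1,u^2)-\tau(s)\big\rangle-\tfrac{\kappa}{2}\big|(u^1,u^2)-\tau(s)\big|^2,
\]
where $\tau(s)=(s,\theta(s))$ and $s=s(u^1,u^2)$ is the nearest-point parameter on $\mathcal C$. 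For $\kappa$ large relative to $|\gamma_0|\,\|\mathrm{Hess}\,f\|$, $\ell$ is strictly concave on a tube about $\mathcal C$, $\ell+pf$ is strictly concave for $p=\gamma_0$, and its unique interior maximizer is $\tau(s)$; the tangential part of the Hessian is controlled precisely because $\tfrac{d^2}{ds^2}\ell(\tau(s))=-\gamma_0'\tilde f'-\gamma_0\tilde f''$, which (4) renders negative through a suitable choice of sign and monotonicity of $\gamma_0$ --- the same choice that makes $\ell$ increasing. Off $\mathcal C$ one has $\nabla\ell\not\parallel\nabla f$ (because $\mathcal C$ is transverse to the level sets of $f$, $\tilde f$ being non-constant by (4)), so $\{\gamma^1=\gamma^2\}$ is exactly $\mathcal C$. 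Finally $b$ comes from the terminal condition for OC, $\ell_{u^i}+f_{u^i}b_x=0$ at $s=T$, which by the prescription above reduces to $b_x(x)=\gamma_0(\varphi(x))$; condition (3) guarantees $u^2(T,x)=\theta(\varphi(x))$, so this is consistent with both controls, and one more integration recovers $b$, exactly as in \eqref{bx}.

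It then remains to verify optimality. With $(\ell,\rho=0,f,b)$ so defined, the Markov control $u^*=(u^1,\theta(u^1))$ together with the costate $p(t,z)=\gamma_0(u^1(t,z))$ satisfies the HJB equation of OC: the state equation because $u^1_t+\tilde f\,u^1_x=0$; the first-order maximum condition at $\tau$ by the previous paragraph; strict concavity of $\ell+pf$ upgrades this to a global maximum; and the terminal/transversality condition by the construction of $b$. The one-player instance of the sufficiency result of Appendix~\ref{AppA} (a standard Hamilton--Jacobi--Bellman verification) then gives that $u^*$ is optimal for OC, i.e.\ OC$=(\ell,0,f,b)$ implements the MPNE. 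Monotonicity and strict concavity of $\ell$ were arranged along the way.

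The main obstacle is the construction of $\ell$. The tempting shortcut $\ell=-\Psi(f)$ does force $\gamma^1\equiv\gamma^2$ identically, but it is constant on the level sets of $f$ and hence never strictly concave; strict concavity compels $\{\gamma^1=\gamma^2\}$ to be a genuine curve rather than an identity, and that curve must be engineered to coincide with $\mathcal C$ while keeping the full Hessian negative definite --- this is exactly where condition (4) enters, and where the sign bookkeeping (making $\ell$ simultaneously increasing, strictly concave, and compatible with the fixed curvature of $\tilde f$) must be done carefully; consistent with the paper's practice elsewhere, one may simply record that this poses no conceptual difficulty and omit the arithmetic. A secondary check is that $f=\lambda$ (or $\mu$) is negative along the equilibrium, so that OC is genuinely a nonrenewable-resource model; this is implicit in the feasibility hypotheses of (2) and (4).
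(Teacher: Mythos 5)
Your argument is correct and follows the paper's skeleton --- reduce to a scalar problem along the curve $\mathcal C=\{u^2=\theta(u^1)\}$ via the eigenstructure of $A$, tie $\ell'$, $f'$ and the bequest together through the terminal condition, use the one remaining degree of freedom to force strict concavity, and observe that the MPNE system collapses to $u^1_t+f(u^1)\,u^1_x=0$ --- but it departs from the paper in two substantive ways. First, the paper fixes the bequest $b$ and derives $\ell'(u)=-f'(u)\,b'(\psi(u))$ with $\psi=(B^1_x)^{-1}\circ L_{1o}'$, settling $\ell''<0$ by an explicit four-case sign analysis on $(f',f'',\psi')$; you instead fix the costate profile $\gamma_0$ and derive both $\ell$ and $b_x=\gamma_0\circ\varphi$. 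Since $\gamma_0=b'\circ\psi$, these are equivalent parametrizations, so the ``sign bookkeeping'' you defer is exactly the paper's case analysis and is where hypothesis (4) is actually consumed (via the tangential second derivative $-\gamma_0'f'$); it should not be omitted entirely. Second, and this is the genuine added value of your route: the paper constructs $\ell$ only as a scalar function along $\mathcal C$ and never checks that the joint maximization of $h=\ell+pf$ over the full two-dimensional control region selects the point $(u^1,\theta(u^1))$ --- read literally, a one-variable $\ell$ cannot satisfy the interior first-order condition in $u^2$. Your quadratic-ridge extension of $\ell$ off $\mathcal C$, together with the observation that the degenerate choice $\ell=-\Psi(f)$ is never strictly concave, supplies precisely this missing verification (modulo the fact that the tubular-neighbourhood construction is only local, which is consistent with the paper's level of rigour elsewhere). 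Likewise, your Riemann-invariant/simple-wave argument for $u^2\equiv\theta(u^1)$ from terminal data on $\mathcal C$ (condition \eqref{S1S2}) makes explicit a step the paper asserts without proof.
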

\begin{proof} Without loss of generality, suppose that part (4) of the theorem holds for $f(u)=\lambda(u,\theta(u))$. Notice that $f$ is well defined due to the assumptions made and the previous results (if needed, choose the other eigenvalue). We have to choose $\ell$ and the bequest function $b$. From the terminal condition for $u^i$, we get $x=(B^i_x)^{-1}(L_i'(u^i))$, and there is no ambiguity here for $x$, since we assume that \eqref{S1S2} holds. Now, look the final condition for the control problem at $t=T$: it reads $\ell'(u) + f'(u) b'(x) = 0$. Let the function $\psi(\cdot ) = (B^1_x)^{-1}(L_1'(\cdot ))$. Note that the hypotheses set on the game imply that $\psi$ is  strictly monotone. Plug $x = \psi(u)$ into the equation to obtain for the derivative of $\ell$ the expression
$$
\ell'(u) = -f'(u) b'(\psi(u)).
$$
Integrate to get $\ell$. Now, to force strict concavity of $\ell$ (more than that: $\ell''<0$), we choose a suitable function $b$, depending on the signs of $f'$, $f''$ and $\psi'$. We start from $\ell'' = - \left(f'' b' + f'b'' \psi'\right)$ and study below all possible cases.
\begin{itemize}
\item $f'\psi'>0$, $f''\ge 0$: choose $b$ strictly increasing and convex;
\item $f'\psi'>0$, $f''\le 0$: choose $b$ strictly decreasing and convex;
\item $f'\psi'<0$, $f''\ge 0$: choose $b$ strictly increasing and concave;
\item $f'\psi'<0$, $f''\le 0$: choose $b$ strictly decreasing and convex;
\end{itemize}
The above covers all possibilities, making $\ell$ strictly concave. Continuing with the proof, notice that the PDE for the control problem becomes
\begin{equation}\label{PDEcontrol}
\gamma'(u) (u_t + \lambda(u,\theta(u)) u_x)= 0,
\end{equation}
where $\gamma(u)=-\ell'(u) / f'(u)$; note that $\gamma'(u)\neq 0$ since $\ell''<0$.
The PDE system for the MPNE collapse into $u_t + \lambda(u,\theta(u)) u_x=0$, by the way that $\lambda$ and $\theta'$ are chosen, as explained in the results above. Also, the final condition for the MPNE is fulfilled by hypothesis. Thus, the control problem solution $u$ gives rise to the MPNE $(u^1,u^2)=(u^1,\theta (u^1))$ of the game.

\end{proof}

\clearpage

\end{document}